\PassOptionsToPackage{dvipdfmx}{graphicx}
\PassOptionsToPackage{dvipdfmx}{xcolor}
\PassOptionsToPackage{dvipdfmx}{hyperref}

\documentclass[3p,nonatbib]{elsarticle}

\usepackage{amsmath,amssymb,amsthm}
\usepackage{graphicx,xcolor,caption}
\usepackage[normalem]{ulem}
\usepackage{arydshln}
\usepackage[hidelinks,hypertexnames=false]{hyperref}

\newtheorem{theorem}{Theorem}
\newtheorem{lemma}{Lemma}
\newtheorem{proposition}{Proposition}
\theoremstyle{definition}
\newtheorem{example}{Example}
\theoremstyle{remark}
\newtheorem{remark}{Remark}

\journal{Physica D: Nonlinear Phenomena}

\begin{document}

\begin{frontmatter}

\title{An integrable semi-discretization of the two-component Hunter--Saxton equation}

\author[was-pam]{Ayako Hori\corref{cor1}}
\ead{ayako0903@akane.waseda.jp}
\author[was-pam]{Yuta Tanaka}
\author[was-am]{Ken-ichi Maruno}
\author[kobe]{Yasuhiro Ohta}

\cortext[cor1]{Corresponding author}

\address[was-pam]{Department of Pure and Applied Mathematics, School of Fundamental Science and Engineering, Waseda University, 3-4-1 Okubo, Shinjuku-ku, Tokyo 169-8555, Japan}
\address[was-am]{Department of Applied Mathematics, Faculty of Science and Engineering, Waseda University, 3-4-1 Okubo, Shinjuku-ku, Tokyo 169-8555, Japan}
\address[kobe]{Department of Mathematics, Kobe University, 1-1 Rokkodai-cho, Nada-ku, Kobe 657-8501, Japan}

\begin{abstract}
In this paper, we propose an integrable semi-discretization of 
the two-component Hunter--Saxton (2-HS) equation, the short-wave limit of the two-component Camassa--Holm (2-CH) equation. 
At the continuous level, we show that the 2-HS equation can be derived from a new bilinear formulation, distinct from the conventional one in the literature, via a pseudo 2-reduction and 
a hodograph transformation. For the semi-discrete construction, we first discretize the underlying bilinear equations in the spatial direction. We then impose the pseudo 2-reduction and apply a discrete hodograph transformation to obtain the semi-discrete system in the physical variables. To the best of our knowledge, the resulting system is the first integrable semi-discretization that preserves 
the two-component structure of the 2-HS equation. We construct the N-soliton solutions of the continuous and 
semi-discrete systems in Wronskian and Casoratian forms, respectively. The integrability of the semi-discrete system is inherited from the underlying integrable hierarchy and is further verified by a Lax pair.
\end{abstract}

\begin{keyword}
integrable semi-discretization \sep two-component Hunter--Saxton equation \sep
short-wave limit of the two-component Camassa--Holm equation \sep hodograph transformation
\end{keyword}

\end{frontmatter}

\section{Introduction}
\label{sec_intro}
The study of discrete integrable systems has progressed
significantly since the mid-1970s, 
following the pioneering work of Hirota and of Ablowitz and Ladik. 
Hirota developed the bilinear and $\tau$-function approach, 
which enables soliton equations to be discretized while preserving their solution structures, 
and successfully applied it to equations 
such as the KdV and sine-Gordon equations \cite{Hirota1, Hirota2, Hirota3, Hirota4, Hirota5}. 
In particular, Hirota proposed the discrete analogue of a generalized Toda equation (DAGTE) \cite{Hirota_DAGTE}, 
which is now known as the discrete Kadomtsev-Petviashvili (KP) equation. 
Subsequently, within the Sato-theoretic framework, the algebraic structures of these discrete equations 
were extensively studied by Date, Jimbo, and Miwa \cite{Date1, Date2, Date3, Date4, Date5}.
In this context, Miwa introduced a key change of variables, now known as the Miwa transformation, 
which provides a direct connection between continuous and discrete integrable systems, 
and showed that the discrete KP equation, also referred to as the Hirota-Miwa equation, 
serves as a master equation for integrable hierarchies \cite{HM}. 
Independently, Ablowitz and Ladik introduced a method of integrable discretization based on the AKNS-type linear 
eigenvalue problem and successfully constructed integrable discretizations of equations 
such as the nonlinear Schr\"{o}dinger equation \cite{Abl1, Abl2, Abl3, Abl4, Abl5}.

As the study of discrete integrable systems progressed, several additional approaches 
to integrable discretization were proposed. For example, in the 1980s, the direct linearization method 
was introduced by Nijhoff, Capel, Quispel, et al. \cite{DIM1, DIM2}. 
Furthermore, Suris developed a systematic approach to integrable discretization 
based on Lax matrices and Hamiltonian structures using Miura-type transformations (localizing changes of variables) to construct local discrete equations of motion from non-local ones \cite{Suris}.
For a comprehensive historical account of discrete integrable systems, see Suris \cite{Suris} and Hietarinta-Joshi-Nijhoff \cite{DS}. 

It is well known that there exist soliton equations associated with 
Wadati-Konno-Ichikawa (WKI)-type linear eigenvalue problems \cite{WKI1, WKI2}. 
Ishimori, as well as Wadati and Sogo, showed that soliton equations in the WKI class can be transformed 
into soliton equations associated with 
AKNS-type linear eigenvalue problems through a hodograph transformation, 
also known as a reciprocal transformation \cite{ishimori1,ishimori2,wadachi1, rogers}. 
Many WKI-type soliton equations possess singular solutions, and their integrable discretizations have long been considered difficult. 
However, by incorporating a discrete hodograph transformation, some of the present authors have constructed integrable discretizations 
of soliton equations with singularities \cite{maruno1,maruno2,maruno3,maruno4,maruno5,maruno6,maruno7}. 

A prominent example of such equations is the Camassa--Holm (CH) equation. The CH equation
 \begin{eqnarray}
 u_{t}+2\kappa^{2}u_{x}-u_{txx}+3uu_{x}=2u_{x}u_{xx}+uu_{xxx},\label{CH}
\end{eqnarray}
where $\kappa$ is a constant, was derived as a model equation for shallow-water waves \cite{CH1, CH2}. This equation is known 
to admit peakon solutions when $\kappa=0$, while for $\kappa\neq0$ it possesses cuspon and smooth soliton solutions. 
Because of the singularities of cuspon and peakon solutions, an integrable discretization of the CH equation had not been constructed. 
Subsequently, some of the present authors succeeded in discretizing the CH equation 
in a spatial direction by reducing it to the bilinear form 
of the extended 2D Toda lattice hierarchy through a hodograph transformation and 
applying Hirota's bilinear method with the discrete hodograph transformation \cite{maruno1}. 
The discretization automatically refines the mesh in regions where the solution varies rapidly, so that the resulting integrable semi-discrete equation serves as a self-adaptive moving-mesh scheme.
The same authors also confirmed its effectiveness as a numerical scheme \cite{maruno12}.

Taking the short-wave limit of the CH equation (\ref{CH}) reduces it to the Hunter--Saxton (HS) equation:
\begin{eqnarray}
u_{txx}-2\kappa^{2}u_{x}+2u_{x}u_{xx}+uu_{xxx}=0.\label{HS}
\end{eqnarray}
In the case of $\kappa=0$, the HS equation (\ref{HS}) reduces to the classical HS equation, 
which is a model for weakly nonlinear orientation waves in massive nematic liquid crystals \cite{HS-1}. 
Hunter and Zheng investigated the bi-Hamiltonian structure and the Lax pair, 
thereby demonstrating the integrability of the classical HS equation \cite{HS-2}. 
They also studied the global existence of weak solutions and related properties in their subsequent works \cite{HS-3, HS-4}. 

In the present paper, we study the two-component HS (2-HS) equation
\begin{eqnarray}
\left\{
\begin{array}{ll}
 u_{txx}-2\kappa^{2} u_{x}-2u_{x}u_{xx}-uu_{xxx}=-\mu\rho\rho_{x},\\
\rho_{t}-(\rho u)_{x}=0,
\end{array}
\right.
\label{Feng}
\end{eqnarray}
which is a two-component generalization of the HS equation (\ref{HS}). 
The parameter $\kappa^{2}$ is a non-negative real number, and $\mu$ is a non-zero real number. 
Historically, a system equivalent to the
$\kappa=0$ case of (\ref{Feng}) already appeared as the scaling limit of the
dual Ito system in the work of Olver and Rosenau \cite{2CH-0}, although it was
not identified there by the name 2-HS. Pavlov later formulated the system explicitly as the two-component Hunter--Saxton system, related it to the
Gurevich--Zybin system and the first negative flow of the two-component
Harry--Dym hierarchy, and gave a scalar spectral representation
\cite{Pavlov-GZ}. Aratyn, Gomes, and Zimerman subsequently embedded the
two-component CH and Dym/HS-type systems in a unified deformation framework
and retained the integration constant in the corresponding scalar spectral
problem \cite{AGZ}. The nonzero linear term in (\ref{Feng}) is associated
with this integration constant.
Subsequent analytical and geometric studies include work by
Wunsch and collaborators on weak and global solutions in periodic settings
with $\kappa=0$ \cite{2HS-1,2HS-3,2HS-4,2HS-6}, by Moon and Liu on wave
breaking and global existence \cite{2HS-7}, and by Kohlmann on
semidirect-product geometry \cite{2HS-5}.
Related studies include supersymmetric extensions of the CH and HS equations \cite{2HS-2} and solitary-wave solutions of a generalized two-component Hunter--Saxton system \cite{2HS-8}.

The 2-HS equation (\ref{Feng}) can be obtained as the short-wave limit of the
2-CH equation \cite{2CH-0,AGZ,2CH-1,2CH-2,2CH-3}
\begin{eqnarray}
\left\{
\begin{array}{ll}
 m_{t}-um_{x}-2mu_{x}+\mu \rho \rho_{x}=0,\\
 \rho_{t}-(\rho u)_{x}=0,\\
 m=\kappa^{2}-u+u_{xx},
\end{array}
\right.
\label{2CH}
\end{eqnarray}
which is a two-component generalization of the CH equation (\ref{CH}). The parameter $\kappa^{2}$ is a non-negative real number, 
and $\mu$ is a non-zero real number. A system equivalent to
(\ref{2CH}) arose from the
tri-Hamiltonian dual of the Ito system constructed by Olver and Rosenau
\cite{2CH-0}. Related bi-Hamiltonian and deformation formulations, including
the $\mu=-1$ case, were subsequently developed in
\cite{AGZ,2CH-1,2CH-2,2CH-3}.
From a physical perspective, it has been shown that the 2-CH equation (\ref{2CH}) with $\mu=1$ and $\kappa=0$ can be derived from the fully nonlinear Green-Naghdi equations, a shallow water model \cite{2CH-00}.
In the case of $\rho=0$, the 2-CH equation (\ref{2CH}) reduces, up to the sign convention for $u$, to the CH equation (\ref{CH}). 

Now, we take the short-wave limit of the 2-CH equation (\ref{2CH}) to derive the 2-HS equation (\ref{Feng}). 
Following the procedure in \cite{maruno2}, we introduce the time and space variables $\tau$ and $\xi$
 \begin{eqnarray}
\tau=\epsilon t,\qquad \xi=\epsilon^{-1} x, \label{SL}
\end{eqnarray}
where $\epsilon$ is a small positive parameter.  We expand $u$ and $\rho$ in powers of $\epsilon$ as 
\begin{equation}
u=\epsilon^{2}(u_{0}+\epsilon u_{1}+\epsilon^{2} u_{2}+\cdots),\quad \rho=\epsilon(\rho_{0}+\epsilon \rho_{1}+\epsilon^{2} \rho_{2}+\cdots),\label{1.6}
\end{equation}
where $u_{i}$ and $\rho_{i}$ $(i=0,1,2,\cdots)$ are functions of $\xi$ and $\tau$. From (\ref{SL}), we obtain
\begin{equation} 
\partial_{t}=\epsilon\partial_{\tau},\quad\partial_{x}=\epsilon^{-1}\partial_{\xi}.\label{1.7}
\end{equation}
Substituting (\ref{1.6}) and (\ref{1.7}) into the first and second equations of (\ref{2CH}), we obtain 
the following PDEs for $u_{0}$ and $\rho_{0}$ at the lowest order in $\epsilon$:
\begin{equation}
u_{0,\tau\xi\xi}-2\kappa^{2} u_{0,\xi}-2u_{0,\xi}u_{0,\xi\xi}-u_{0}u_{0,\xi\xi\xi}=-\mu \rho_{0}\rho_{0,\xi},\label{1.8}
\end{equation}
\begin{equation}
\rho_{0,\tau}-(u_{0}\rho_{0})_{\xi}=0.\label{1.9}
\end{equation}
Rewriting the variables as $u_{0} \rightarrow u$, $\rho_{0} \rightarrow \rho$, $\xi \rightarrow x$, 
and $\tau \rightarrow t$ in (\ref{1.8}) and (\ref{1.9}), we obtain the 2-HS equation (\ref{Feng}).

For $\kappa\neq0$, the coefficient $2\kappa^{2}$ in (\ref{Feng}) can be normalized to $4$ by scaling $x$, $t$, and $\rho$. 
For the case $\mu$<0, writing $\mu=-c^{2}$ with $c>0$, we use the normalized form
\begin{eqnarray}
\left\{
\begin{array}{ll}
u_{txx}-4 u_{x}-2u_{x}u_{xx}-uu_{xxx}=c^{2}\rho\rho_{x},\\
\rho_{t}=(\rho u)_{x},
 \end{array}
\right.
\label{2HHS}
\end{eqnarray}
where $c$ is a positive constant. 
Integrating (\ref{2HHS}) with respect to $x$ under the boundary conditions $u(x,t)\rightarrow 0$ and $\rho(x,t)\rightarrow 1$ 
as $|x|\rightarrow \infty$ yields
\begin{eqnarray}
\left\{
\begin{array}{ll}
u_{tx}-uu_{xx}-\displaystyle\frac{1}{2}u_{x}^{2}-4u-\displaystyle\frac{c^{2}}{2}\rho^{2}+\displaystyle\frac{c^{2}}{2}=0,\\
\rho_{t}=(\rho u)_{x}.
 \end{array}
\right.
\label{2HS}
\end{eqnarray}
Hereafter, we refer to (\ref{2HS}) as the 2-HS equation.
For the normalized 2-HS equation (\ref{2HHS}), particularly in the field of integrable systems, a set of bilinear equations was derived through a 2-reduction from the extended two-dimensional Toda lattice equation, 
which coincides with the bilinear form of the CH equation, and the N-soliton solution 
was constructed in Wronskian form \cite{2HS-10}. Furthermore,
Matsuno constructed the N-soliton solution of the 2-HS equation (\ref{Feng}) for $\mu=1$ by taking the short-wave limit 
of the N-soliton solution of the two-component Camassa--Holm (2-CH) equation \cite{2HS-11}.
More specifically, Theorem 2.1 of Lou, Feng, and Yao \cite{2HS-10} can be written in the present notation as 
\begin{eqnarray}
\left\{
\begin{array}{ll}
\left(\frac{1}{2}D_{X}D_{T}-1\right)f\cdot f=-gh,\\
\\
\left(\frac{1}{c}D_{T}+1\right)g\cdot h=f^{2},\\
\\
\left(D_{X}D_{T}+\frac{2}{c}D_{T}+cD_{X}\right)g\cdot h=0.
\end{array}
\right.
\label{LFY-bilinear}
\end{eqnarray}
Since the bilinear system (\ref{LFY-bilinear}) is invariant under $(c,g,h)\rightarrow (-c,h,g)$, the restriction $c>0$ entails no loss of generality.
This formulation uses three tau-functions $f$, $g$, and $h$, together with the dependent variable transformation
\begin{equation}
u=(\log f)_{TT},\qquad \rho=\frac{f^{2}}{gh},
\label{LFY-transform}
\end{equation}
and the hodograph transformation
\begin{equation}
x=X-(\log f)_{T},\qquad t=T.
\label{LFY-hodo}
\end{equation}

As mentioned above, the bilinear form obtained in the previous study \cite{2HS-10} 
coincides with the conventional bilinear form of the CH equation, 
and the integrable semi-discretization of the CH equation (\ref{CH}) has already been successfully established \cite{maruno1}.
However, an integrable semi-discrete analogue that preserves the two-component structure of the 2-HS equation has not yet been constructed. To address this problem, it is useful to introduce a new bilinear representation 
distinct from the conventional CH-type formulation used in \cite{2HS-10}. 
In this paper, we derive a new bilinear form of the 2-HS equation and use this new formulation, together with Hirota's 
discretization method and the discrete hodograph transformation, to construct an integrable semi-discrete analogue of (\ref{2HS}). 
In contrast to the CH-type formulation (\ref{LFY-bilinear})--(\ref{LFY-hodo}), our approach is based on a two-tau formulation 
with only $f$ and $g$, and after reduction it leads to the bilinear equations
\begin{eqnarray}
\left\{
\begin{array}{ll}
(D_{T}^{2}+cD_{T})g\cdot f=0,\\
(D_{X}(D_{T}^{2}+cD_{T})-4D_{T})g\cdot f=0,
\end{array}
\right.
\label{new-bilinear-intro}
\end{eqnarray}
together with the dependent variable transformation
\begin{equation}
\psi=\frac{g}{f},\qquad u=(\log f)_{TT},\qquad \rho=\frac{1}{1-(\log f)_{XT}},
\label{new-transform-intro}
\end{equation}
and the hodograph transformation
\begin{equation}
dX=\rho\,dx+\rho u\,dt,\qquad dT=dt.
\label{new-hodo-intro}
\end{equation}
Thus, the difference is not only in the bilinear equations themselves, but also
in the underlying tau-function structure and the associated dependent variable transformation. 
From the viewpoint of Sato theory, the integrability of the proposed semi-discrete system is built into its construction: 
the determinant identities underlying the bilinear equations are Pl\"ucker relations 
for the Wronskian and Casoratian tau functions arising from the integrable hierarchy, 
and the pseudo 2-reduction is imposed at the tau-function level. The discrete hodograph transformation 
then carries this structure to the semi-discrete system in the physical variables. 
The principal novelty of the present work is therefore the construction, through the new two-tau formulation 
and the discrete hodograph transformation, of the first integrable semi-discrete analogue of (\ref{2HS}) 
that preserves its two-component structure. 
The Lax pair is presented as an additional direct verification of the inherited integrability in the physical variables, 
rather than as the primary basis for the construction.

This paper is structured as follows.
In Section \ref{2}, we derive a new bilinear formulation of the 2-HS equation, 
distinct from the conventional bilinear form of Lou, Feng, and Yao \cite{2HS-10}, 
through a pseudo 2-reduction and show that it leads to the 2-HS equation via a hodograph transformation. We also construct its N-soliton solution in Wronskian form and present examples of one- and two-soliton interactions. 
In Section \ref{3}, we discretize this new bilinear formulation in the spatial direction, 
derive an integrable semi-discrete analogue of the 2-HS equation, construct its N-soliton solution in Casoratian form, 
and present examples of one- and two-soliton interactions. In Section \ref{sec:lax},
we construct a matrix Lax representation for the semi-discrete system and derive its scalar counterpart. We then take the continuous limit of the matrix Lax pair and show that, after scalar reduction, it reproduces the known scalar Lax representation of the continuous 2-HS equation.
Finally, in Section \ref{4}, we conclude the paper and discuss future directions.

\section{Bilinear equations and determinant solutions of the 2-HS equation}\label{2}

In this section, we introduce a new bilinear form of the 2-HS equation, 
distinct from the conventional CH-type bilinear form obtained by Lou, Feng, and Yao \cite{2HS-10}, 
and construct its N-soliton solution. 
For clarity, we emphasize here that the conventional formulation in \cite{2HS-10} is a three-tau CH-type system \eqref{LFY-bilinear} 
with the dependent variable transformation \eqref{LFY-transform}, 
whereas the present paper uses a genuinely different two-tau formulation based on $f$ and $g$ only. 
This new two-tau structure is the key feature that enables the subsequent semi-discretization in Section \ref{3} 
while preserving the two-component nature of the 2-HS equation.

Throughout this paper, $D_X$, $D_T$, and $D_S$ denote the Hirota bilinear differential operators defined by
\begin{equation}
D_X^l D_T^m D_S^r g \cdot f = \left. (\partial_X - \partial_{X'})^l (\partial_T - \partial_{T'})^m (\partial_S - \partial_{S'})^r g(X,T,S) f(X',T',S') \right|_{X'=X, T'=T, S'=S}.
\end{equation}

\begin{lemma}
The bilinear equations 
\begin{eqnarray}
\left\{
\begin{array}{ll}
(D_{T}^{2}+D_{S})g\cdot f=0,\\
(D_{X}(D_{T}^{2}+D_{S})-4D_{T})g\cdot f=0,
 \end{array}
\right.
\label{2.1}
\end{eqnarray}
admit the Wronskian solution 
\begin{equation}
f=\tau_{0},\quad g=\tau_{1},\qquad\tau_{n}=
\left|
\begin{array}{rrrr}
\varphi_{1}^{(n)}& \varphi_{1}^{(n+1)}& \cdots & \varphi_{1}^{(n+N-1)} \\
\varphi_{2}^{(n)}& \varphi_{2}^{(n+1)}& \cdots & \varphi_{2}^{(n+N-1)} \\
\vdots \quad& \vdots \quad & \ddots & \vdots \quad\quad\\
\varphi_{N}^{(n)}& \varphi_{N}^{(n+1)}& \cdots & \varphi_{N}^{(n+N-1)} \\
\end{array}
\right|,\label{2.3}
\end{equation}
where $\varphi_{i}^{(n)}=\varphi_{i}^{(n)}(X,T,S)$ satisfies the following linear dispersion relations
\begin{equation}
\frac{\partial\varphi_{i}^{(n)}}{\partial X}=\varphi_{i}^{(n+1)},\qquad\frac{\partial\varphi_{i}^{(n)}}{\partial T}=\varphi_{i}^{(n-1)},\qquad\frac{\partial\varphi_{i}^{(n)}}{\partial S}=\varphi_{i}^{(n-2)}.\label{2.4}
\end{equation}
In particular, the specific choice of $\varphi_{i}^{(n)}$ given by
\begin{eqnarray}
\varphi_{i}^{(n)}=p_{i}^{n}e^{p_{i}X+\frac{1}{p_{i}}T+\frac{1}{p_{i}^{2}}S+\xi_{i,0}}+q_{i}^{n}e^{q_{i}X+\frac{1}{q_{i}}T+\frac{1}{q_{i}^{2}}S+\eta_{i,0}},\label{phi_con}
\end{eqnarray}
satisfies the linear dispersion relations and yields the N-soliton solutions, where $\xi_{i,0}$ and $\eta_{i,0}$ are arbitrary phase constants.
\end{lemma}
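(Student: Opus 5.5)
We verify both bilinear equations by the standard Wronskian technique: express the required derivatives of $\tau_0$ and $\tau_1$ as Wronskians with modified columns, substitute, and reduce each equation to a Laplace expansion of a vanishing $2N\times 2N$ determinant, i.e.\ a Plücker relation.

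We use Freeman--Nimmo shorthand. Write $\tau_0=|0,1,\dots,N-1|$ and $\tau_1=|1,2,\dots,N|$, where column $j$ stands for the vector $(\varphi_1^{(j)},\dots,\varphi_N^{(j)})^{T}$. Let $\widehat{k}$ denote the column with superscript $k$ when it replaces a column in the natural position.

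\textbf{Derivatives.} The dispersion relations (\ref{2.4}) give the following.
\begin{itemize}
\item $X$-derivative: $\partial_X$ raises each index by one, so $\partial_X\tau_0=|0,\dots,N-2,N|$.
\item $T$-derivative: $\partial_T$ lowers each index by one. For $\tau_1$ this gives $\partial_T\tau_1=|0,2,\dots,N|$. For $\tau_0$ it gives $\partial_T\tau_0=|{-1},1,\dots,N-1|$.
\item Second $T$-derivatives: $\partial_T^2\tau_0$ is a sum of $|{-2},1,\dots,N-1|$ and $|{-1},0,2,\dots,N-1|$.
\item $S$-derivative: $\partial_S$ lowers indices by two. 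Only columns $0$ and $1$ survive shifting without repetition, giving $\partial_S\tau_0=|{-2},1,\dots,N-1|-|{-1},0,2,\dots,N-1|$.
\item Mixed derivative: for $\partial_X\partial_T^2$ and $\partial_X\partial_S$, the same bookkeeping on $\partial_X\tau_0$ yields mixed columns.
\end{itemize}
Analogous formulas hold for $\tau_1$, with all indices shifted by one.

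\textbf{First equation.} In $(D_T^2+D_S)g\cdot f$, the combination $\partial_T^2+\partial_S$ kills the $|{-1},0,\dots|$-type terms and leaves twice the $|{-2},\dots|$ term. Similarly, $\partial_T^2-\partial_S$ leaves twice the $|{-1},0,\dots|$ term. The equation then reduces to
\[
|\widehat{-1},1,\dots,N-1|\,|0,\dots,N| \;-\; |{-1},0,2,\dots,N-1|\,|1,\dots,N| \;+\; \dots \;=\;0 .
\]
Up to index normalisation, this is the Plücker relation obtained from the Laplace expansion of
\[
\begin{vmatrix} Q & a & b & c & \mathbf{0}\\ \mathbf{0} & a & b & c & Q\end{vmatrix}=0,
\]
where $Q$ is the common block of $N-2$ columns $\{1,\dots,N-1\}$ minus one, and $a,b,c$ are three of the columns $-1,0,N$. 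I would choose the blocks by matching each product of $\tau$-functions to a pair of complementary minors.

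\textbf{Second equation.} This is the harder step. The term $D_X(D_T^2+D_S)g\cdot f$ produces products of determinants differing in up to two columns, so a single three-term Plücker relation will not suffice. I would proceed as follows:
\begin{enumerate}
\item Expand $D_X(D_T^2+D_S)$ into its eight terms.
\item Substitute the Wronskian derivative formulas.
\item Use the first bilinear equation, which is already proven, to eliminate the terms of the form $\partial_X\tau_1\cdot(\cdots)$.
\item Group what remains into two Plücker relations.
\end{enumerate}
The $-4D_T g\cdot f$ term must be matched against terms coming from index collisions. If the column-shift bookkeeping fails to close, the fallback is the standard trick of differentiating the identity $\sum_j \partial_X^{\text{col }j}\,\tau=\operatorname{tr}(\cdot)\,\tau$ for a suitable matrix. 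Here each $\varphi_i$ is replaced by the combination of exponentials, so that $\partial_X\partial_T$ and $\partial_X\partial_S$ acting row-wise have eigenvalues $1$ and $1/p_i$. This row-wise identity supplies the constant coefficient $4$.

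\textbf{$N$-soliton claim.} The explicit choice (\ref{phi_con}) satisfies (\ref{2.4}) immediately. For the exponential $e^{pX+T/p+S/p^2}$, the derivatives $\partial_X$, $\partial_T$ and $\partial_S$ multiply by $p$, $p^{-1}$ and $p^{-2}$ respectively, matching the index shifts $+1$, $-1$ and $-2$ of the factor $p^n$. The $N$-soliton form then follows by expanding $\tau_0$ and $\tau_1$ as Vandermonde-type sums over subsets.
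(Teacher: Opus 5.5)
Your method is the paper's: Wronskian derivative formulas, then three-term Pl\"ucker relations obtained by Laplace expansion of vanishing $2N\times 2N$ determinants. For the first equation your reduction is exactly the paper's \eqref{det1} at $n=0$.

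Your block matrix there needs fixing. The relation involves the $N+2$ distinct columns $-1,0,1,\dots,N$, so besides the common block $Q=\{2,\dots,N-1\}$ you need four free columns $-1,0,1,N$, not three. As written, your matrix has $2N$ rows but only $2N-1$ columns. The identity you must land on is
\[
|{-1},2,\dots,N|\,|0,\dots,N-1|-|0,2,\dots,N|\,|{-1},1,\dots,N-1|+|1,\dots,N|\,|{-1},0,2,\dots,N-1|=0,
\]
which equals $\tfrac12(D_T^2+D_S)\tau_1\cdot\tau_0$. No $(N+1)$-column symbol such as $|0,\dots,N|$ occurs.

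For the second equation your route differs from the paper's and does work, but step~3 must be made precise. As it stands, the proposal never identifies a determinant identity for this equation, and that is the substantive step.

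\textbf{Making step~3 precise.} The first equation contains no $X$-derivatives, so what you must use is its $X$-derivative. With $P=D_T^2+D_S$ one has $D_XP\,g\cdot f=\partial_X(P\,g\cdot f)-2P(g\cdot\partial_X f)$. The claim therefore reduces to
\[
P(\tau_1\cdot\partial_X\tau_0)+2D_T\tau_1\cdot\tau_0=0.
\]
In $P(\tau_1\cdot\partial_X\tau_0)$ the collision terms are $\tau_0$ inside $\partial_X\partial_T\tau_0$ and $2\partial_T\tau_0$ inside $\partial_X(\partial_T^2-\partial_S)\tau_0$. They add up to exactly $-2D_T\tau_1\cdot\tau_0$. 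What remains is twice the single Pl\"ucker relation \eqref{2.21} at $n=0$. So after your elimination one relation suffices, not two.

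\textbf{Comparison with the paper.} The paper never invokes the first equation. It subtracts \eqref{2.21} from \eqref{2.20} and rewrites the result via the derivative formulas. In effect, $P(\partial_X g\cdot f)$ and $P(g\cdot\partial_X f)$ equal $+2D_Tg\cdot f$ and $-2D_Tg\cdot f$ plus twice the left-hand sides of \eqref{2.20} and \eqref{2.21}, respectively. The $-4D_T$ term is thus produced by collisions on both sides. The two arguments are equivalent, because $\partial_X(P\,\tau_1\cdot\tau_0)$ is twice the sum of those two left-hand sides. Yours simply trades one determinant identity for the already-proven first equation.

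\textbf{The fallback.} Drop it. For $\varphi_i$ obeying \eqref{2.4}, including the two-exponential choice \eqref{phi_con}, $\partial_X\partial_S\varphi_i^{(n)}=\varphi_i^{(n-1)}$ is not a multiple of $\varphi_i^{(n)}$, so there is no row-wise eigenvalue $1/p_i$. In any case the lemma is stated for arbitrary $\varphi_i$ satisfying \eqref{2.4}. The constant $4$ comes entirely from the column collisions, as your main plan says.
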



\begin{proof}
For simplicity, we introduce the following notation
\begin{equation}
|m_{1},m_{2},\cdots,m_{N}|=
\left|
\begin{array}{rrrr}
\varphi_{1}^{(m_{1})}& \varphi_{1}^{(m_{2})}& \cdots & \varphi_{1}^{(m_{N})} \\
\varphi_{2}^{(m_{1})}& \varphi_{2}^{(m_{2})}& \cdots & \varphi_{2}^{(m_{N})} \\
\vdots \quad& \vdots \quad & \ddots & \vdots \quad\quad\\
\varphi_{N}^{(m_{1})}& \varphi_{N}^{(m_{2})}& \cdots & \varphi_{N}^{(m_{N})} \\
\end{array}
\right|.
\end{equation}
In this notation, $\tau_{n}$ is rewritten as 
\begin{equation}
\tau_{n}=|n,n+1,\cdots,n+N-1|.
\end{equation}
We prove that the above $\tau_{n}$ satisfies the bilinear equations (\ref{2.1}).
We obtain the following derivative formulas:

\begin{equation}
\partial_{X}\tau_{n}=|n,\cdots,n+N-2, n+N|,\label{2.71}
\end{equation}
\begin{equation}
\partial_{T}\tau_{n}=|n-1,n+1,\cdots,n+N-1|,\label{2.72}
\end{equation}
\begin{eqnarray}
\partial_{S}\tau_{n}=|n-2,n+1,\cdots,n+N-1|-|n-1,n,n+2,\cdots,n+N-1|,\label{2.73}
\end{eqnarray}
\begin{eqnarray}
\partial_{T}^{2}\tau_{n}=|n-2,n+1,\cdots,n+N-1|+|n-1,n,n+2,\cdots,n+N-1|,
\label{2.74}
\end{eqnarray}
\begin{eqnarray}
\partial_{X}\partial_{T}\tau_{n}=|n-1,n+1,\cdots,n+N-2,n+N|+|n,\cdots,n+N-1|,
\label{2.75}
\end{eqnarray}
\begin{eqnarray}
\partial_{X}\partial_{S}\tau_{n}=|n-2,n+1,\cdots,n+N-2,n+N|-|n-1,n,n+2,\cdots,n+N-2, n+N|,
\label{2.76}
\end{eqnarray}
\begingroup\footnotesize
\begin{eqnarray}
\partial_{X}\partial_{T}^{2}\tau_{n}=|n-2,n+1,\cdots,n+N-2,n+N|+|n-1,n,n+2,\cdots,n+N-2,n+N|+2|n-1,n+1,\cdots,n+N-1|,\label{2.77}
\end{eqnarray}
\endgroup

First, we show that $\tau_{n}$ satisfies the first equation of (\ref{2.1}). 
Now, we introduce an identity for the $2N\times 2N$ determinant:
\begin{equation}
\left|
\begin{array}{c:c:c:c:c:c}
n-1& n& n+2 \quad\cdots\quad n+N-1  &n+1 &\varnothing&n+N\\\hdashline
n-1& n& \varnothing&n+1 &n+2\quad\cdots\quad n+N-1 &n+N\\ 
\end{array}
\right|
=0.
\end{equation}
Applying the Laplace expansion to the left-hand side yields an algebraic bilinear identity for determinants
\begingroup\small
\begin{eqnarray}
&&|n-1,n,n+2,\cdots, n+N-1|\times|n+1,\cdots,n+N|-|n-1,n+1,\cdots,n+N-1|\times|n,n+2,\cdots,n+N|\nonumber\\
&&\qquad +|n,\cdots,n+N-1|\times|n-1,n+2,\cdots,n+N|=0.
\label{det1}
\end{eqnarray}
\endgroup
Using (\ref{2.72})--(\ref{2.74}), (\ref{det1}) can be rewritten as 
\begin{equation}
(\partial_{T}^{2}-\partial_{S})\tau_{n}\times \tau_{n+1}-2\partial_{T}\tau_{n}\times\partial_{T}\tau_{n+1}+\tau_{n}\times (\partial_{T}^{2}+\partial_{S})\tau_{n+1}=0.
\label{difbilinear1}
\end{equation}
By setting $n=0, f=\tau_{0},$ and $g=\tau_{1}$ in (\ref{difbilinear1}), the first bilinear equation of (\ref{2.1}) is obtained.

Next, we show that $\tau_{n}$ satisfies the second equation of (\ref{2.1}). 
We introduce identities for the $2N\times 2N$ determinants
\begin{equation}
\left|
\begin{array}{c:c:c:c:c:c}
n+1 & n+2\quad\cdots\quad n+N-1&n+N+1& n-1&n &\varnothing\\\hdashline
n+1&\varnothing&n+N+1 & n-1&n&n+2\quad\cdots\quad n+N-1 \\ 
\end{array}
\right|
=0,
\end{equation}
and
\begingroup\small
\begin{eqnarray}
\left|
\begin{array}{c:c:c:c:c:c:c:c}
n-1& n&n+2\quad\cdots\quad n+N-2& n+N & n+1 & \varnothing & n+N-1 &  \varnothing \\\hdashline
n-1&  n& \varnothing& \varnothing &n+1& n+2 \quad\cdots\quad n+N-2 &n+N-1&n+N \\ 
\end{array}
\right|=0.
\end{eqnarray}
\endgroup
Applying the Laplace expansion to the left-hand sides yields algebraic bilinear identities for determinants
{\small
\begin{equation}
\begin{aligned}
&|n+1,\cdots,n+N-1,n+N+1|\times|n-1,n,n+2,\cdots,n+N-1|\\
&\qquad -|n,n+2,\cdots,n+N-1,n+N+1|\times|n-1,n+1,\cdots, n+N-1|\\
&\qquad +|n,\cdots,n+N-1|\times|n-1,n+2,\cdots,n+N-1,n+N+1|=0,
\end{aligned}
\label{2.20}
\end{equation}
and
\begin{equation}
\begin{aligned}
&|n-1,n,n+2,\cdots,n+N-2,n+N|\times|n+1,\cdots,n+N|\\
&\qquad -|n-1,n+1,\cdots,n+N-2,n+N|\times|n,n+2,\cdots,n+N|\\
&\qquad +|n-1,n+2,\cdots,n+N|\times|n,\cdots,n+N-2,n+N|=0.
\end{aligned}
\label{2.21}
\end{equation}
}
Subtracting (\ref{2.21}) from (\ref{2.20}) and using (\ref{2.71})--(\ref{2.77}), we obtain 
\begingroup\footnotesize
\begin{eqnarray}
&&4(\tau_{n+1}\times \partial_{T}\tau_{n}-\tau_{n}\times \partial_{T}\tau_{n+1})-\partial_{X}\tau_{n}\times(\partial_{S}\tau_{n+1}+\partial_{T}^{2}\tau_{n+1})
-\partial_{X}\tau_{n+1}\times(\partial_{S}\tau_{n}-\partial_{T}^{2}\tau_{n})+\tau_{n+1}\times(\partial_{X}\partial_{S}\tau_{n}-\partial_{X}\partial_{T}^{2}\tau_{n})\nonumber\\
&&\qquad +\tau_{n}\times(\partial_{X}\partial_{S}\tau_{n+1}+\partial_{X}\partial_{T}^{2}\tau_{n+1})+2(\partial_{T}\tau_{n+1}\times \partial_{X}\partial_{T}\tau_{n}-\partial_{T}\tau_{n}\times\partial_{X}\partial_{T}\tau_{n+1})=0.
\label{difbilinear2}
\end{eqnarray}
\endgroup
By setting $n=0,f=\tau_{0}$, and $g=\tau_{1}$ in (\ref{difbilinear2}), the second bilinear equation of (\ref{2.1}) is obtained.
\end{proof}

To obtain the soliton solutions of the 2-HS equation, we impose the pseudo 2-reduction on the Wronskian solution in Lemma 1 by requiring
\begin{eqnarray}
\frac{1}{p_{i}}+\frac{1}{q_{i}}=c, \label{2ps}
\end{eqnarray}
for each pair of soliton parameters $p_{i}$ and $q_{i}$ in (\ref{phi_con}). 
With $f=\tau_{0}$ and $g=\tau_{1}$ as in Lemma 1, this reduction yields the following bilinear identities:
\begin{eqnarray}
D_{S}g\cdot f&=&cD_{T}g\cdot f,\nonumber\\
D_{X}D_{S}g\cdot f&=&cD_{X}D_{T}g\cdot f.\label{2con}
\end{eqnarray}
Using these identities in the bilinear equations (\ref{2.1}), we obtain the following bilinear equations 
\begin{eqnarray}
\left\{
\begin{array}{ll}
(D_{T}^{2}+cD_{T})g\cdot f=0,\\
 (D_{X}(D_{T}^{2}+cD_{T})-4D_{T})g\cdot f=0.
 \end{array}
\right.
\nonumber
\end{eqnarray}
After obtaining these reduced bilinear equations, 
the $S$-dependence can be absorbed into the phase constants, and we fix $S=0$ in the reduced $\tau$-functions.


\begin{theorem}
\label{thm:continuous-2hs}
The 2-HS equation 
\begin{eqnarray}
\left\{
\begin{array}{ll}
u_{tx}-uu_{xx}-\displaystyle\frac{1}{2}u_{x}^{2}-4u-\displaystyle\frac{c^{2}}{2}\rho^{2}+\displaystyle\frac{c^{2}}{2}=0,\\
\rho_{t}=(\rho u)_{x},\nonumber
 \end{array}
\right.
\end{eqnarray}
is derived from a set of bilinear equations 
\begin{eqnarray}
\left\{
\begin{array}{ll}
(D_{T}^{2}+cD_{T})g\cdot f=0,\\
 (D_{X}(D_{T}^{2}+cD_{T})-4D_{T})g\cdot f=0,
 \end{array}
\right.
\label{2.17}
\end{eqnarray}
through the hodograph transformation 
\begin{equation}
dX=\rho dx+\rho u dt,\quad dT=dt,\label{hd}
\end{equation}
and the dependent variable transformation 
\begin{equation}
\psi=\frac{g}{f},\quad u=(\log{f})_{TT},\quad\rho=\frac{1}{1-(\log{f})_{XT}}.\label{2.24}
\end{equation}
\end{theorem}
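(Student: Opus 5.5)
The plan is to work entirely in the hodograph variables $(X,T)$ and to convert to $(x,t)$ only at the end. From \eqref{hd} we have $dx=\rho^{-1}dX-u\,dT$, so for any function $F$
\[
\partial_X=\rho^{-1}\partial_x,\qquad \partial_T=\partial_t-u\,\partial_x .
\]
The cross-derivative compatibility of the two differential forms is $(\rho^{-1})_T=-u_X$. With the definitions \eqref{2.24} this holds identically, since $(\rho^{-1})_T=-(\log f)_{XTT}=-u_X$. Rewritten in the original variables, $(\rho^{-1})_T=-u_X$ becomes
\[
-\rho^{-2}(\rho_t-u\rho_x)=-\rho^{-1}u_x ,
\]
which is exactly $\rho_t=(\rho u)_x$. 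So the second component of the 2-HS equation needs only the hodograph transformation and the definition of $u$ and $\rho$; the bilinear equations are not used for it.

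For the first component I would linearize the bilinear system through $\psi=g/f$. The standard identity $D_T^2g\cdot f/f^2=\psi_{TT}+2(\log f)_{TT}\psi$, together with $D_Tg\cdot f/f^2=\psi_T$, turns the first equation of \eqref{2.17} into
\[
\psi_{TT}+c\psi_T+2u\psi=0 .
\]
Similarly, I would expand $D_XD_T^2g\cdot f/f^2$ and $D_XD_Tg\cdot f/f^2$ in terms of $\psi$ and derivatives of $\log f$. The only combinations that appear are $(\log f)_{TT}=u$, $(\log f)_{XT}=1-\rho^{-1}$, $(\log f)_{XTT}=u_X$, and $(\log f)_X$. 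The term in $(\log f)_X$ should cancel against $(\log f)_X$ times the first equation, which is the usual gauge invariance. What remains is a second linear equation, roughly of the form
\[
\psi_{XTT}+c\psi_{XT}+2u\psi_X+2u_X\psi-2(1-\rho^{-1})(\psi_{TT}+c\psi_T)\cdots-4\psi_T=0 .
\]
Using the first equation to eliminate $\psi_{TT}$, this reduces to a relation among $\psi,\psi_T,\psi_X,\psi_{XT}$ with coefficients built from $u$, $u_X$ and $\rho$.

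The key step is then to treat these two linear equations as a Lax-type pair for $\psi$ and to impose their compatibility. I would differentiate the first equation in $X$ and use the second equation to remove $\psi_{XT}$-type terms. The result should be a first-order system in $T$ for the vector $(\psi,\psi_T)$. Requiring that the coefficient of $\psi$ and the coefficient of $\psi_T$ each vanish should give a scalar equation involving $u_{XT}$, $u$, $u_X^2$ and $\rho$. I expect the constant $c^2$ to enter from the combination $c\psi_T$ in the first equation and $c\psi_{XT}$ in the second, paired with the factor $1-\rho^{-1}$. This is how a term $\tfrac{c^2}{2}(\rho^2-1)$ should appear after multiplying through by $\rho^2$.

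Finally, I would pass back to $(x,t)$ using $u_X=\rho^{-1}u_x$ and $u_{XT}=\rho^{-1}(u_t-uu_x)_x$ minus the corrections produced by $(\rho^{-1})_T=-u_X$. This should yield $u_{tx}-uu_{xx}-\tfrac12u_x^2-4u-\tfrac{c^2}{2}\rho^2+\tfrac{c^2}{2}=0$. The $-4u$ term comes from the $-4D_T$ part of the second bilinear equation.

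The main obstacle is the elimination of $\psi$ from the two linear equations: getting exactly the right combination so that all $\psi$-dependence drops out, and tracking the factors of $\rho$ that produce the $c^2$ term. If the direct compatibility computation becomes unwieldy, I would instead work with the bilinear equations themselves. I would divide them by $gf$ and split into the $\log(gf)$ and $\log(g/f)$ parts, which gives two scalar relations. Eliminating $\log(g/f)$ by cross-differentiation then leads to the same equation in $u$ and $\rho$.
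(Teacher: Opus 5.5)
The overall architecture of your proposal is the paper's: linearize with $\psi=g/f$, eliminate $\psi$, and pass to $(x,t)$ via $\partial_X=\rho^{-1}\partial_x$, $\partial_T=\partial_t-u\partial_x$. Your derivation of $\rho_t=(\rho u)_x$ from $(\rho^{-1})_T=-(\log f)_{XTT}=-u_X$ is correct and is exactly how the paper obtains the second component.

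The gap is in the elimination step you flagged, and the mechanism you describe there fails. First, the second linear equation is not of the form you guess. With $g=\psi f$ one gets $D_XD_T^2g\cdot f/f^2=\psi_{XTT}+2u\psi_X+4(1-\rho^{-1})\psi_T$ and $D_XD_Tg\cdot f/f^2=\psi_{XT}+2(1-\rho^{-1})\psi$. No $(\log f)_X$ or $u_X\psi$ terms appear at all, so
\[
\psi_{XTT}+c\psi_{XT}+2u\psi_X+4(1-\rho^{-1})\psi_T+2c(1-\rho^{-1})\psi-4\psi_T=0 .
\]
Subtracting the $X$-derivative of $\psi_{TT}+c\psi_T+2u\psi=0$ removes every $X$-derivative of $\psi$ and leaves a single relation:
\[
-\frac{4}{\rho}\psi_T+2\left(c(1-\rho^{-1})-u_X\right)\psi=0 .
\]
You cannot require the coefficients of $\psi_T$ and $\psi$ to vanish separately. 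The first is $-4/\rho\neq0$, and neither $u$ nor $u_{XT}$ appears at this stage. Here $\psi$ is the particular function $g/f$, not a generic eigenfunction with free data $(\psi,\psi_T)$. So this relation determines $\psi_T/\psi$; it is not a compatibility condition on the coefficients.

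The missing idea is to read the relation as $w:=\psi_T/\psi=\frac{c}{2}(\rho-1)-\frac{\rho}{2}u_X$ and substitute it into the Riccati form of the first equation, $w_T+w^2+cw+2u=0$. Then:
\begin{itemize}
\item $w_T$ produces $u_{XT}$ and $\rho_T$;
\item $w^2$ produces $\frac14(\rho u_X)^2$;
\item $w^2+cw$ produces $\frac{c^2}{4}(\rho^2-1)$;
\item the leftover terms linear in $c$ are $\frac{c}{2}(\rho_T-\rho^2u_X)$, which vanish only after you invoke the conservation law $\rho_T=\rho^2u_X$.
\end{itemize}
Multiplying by $-2$ gives $(\rho u_X)_T-\frac12(\rho u_X)^2-4u-\frac{c^2}{2}(\rho^2-1)=0$. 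Since $\rho u_X=u_x$ and $(u_x)_T=u_{tx}-uu_{xx}$, this is the first component.

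This also corrects two of your heuristics. The $-4u$ comes from the $2u\psi$ term of the first bilinear equation. The $-4D_T$ term only combines with $4(1-\rho^{-1})\psi_T$ to give the coefficient $-4/\rho$ of $\psi_T$. Your fallback via $\log(gf)$ and $\log(g/f)$ leads to the same first-order relation for $\sigma_T=(\log\psi)_T$ and the same Riccati equation. It too closes only through this substitution, not through a coefficient comparison.
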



\begin{proof}
Dividing the bilinear equations (\ref{2.17}) by $f^{2}$, we obtain 
\begin{equation}
\left\{
\begin{array}{ll}
\displaystyle\frac{D_{T}^{2}g\cdot f}{f^{2}}+c\displaystyle\frac{D_{T}g\cdot f}{f^{2}}=0,\\
\displaystyle\frac{D_{X}D_{T}^{2}g\cdot f}{f^{2}}+c\displaystyle\frac{D_{X}D_{T}g\cdot f}{f^{2}}-4\displaystyle\frac{D_{T}g\cdot f}{f^{2}}=0,
\end{array}
\right.
\end{equation}
i.e.,
\begin{equation}
\left\{
\begin{array}{ll}
\left(\displaystyle\frac{g}{f}\right)_{TT}+2\left(\log{f}\right)_{TT}\displaystyle\frac{g}{f}+c\left(\displaystyle\frac{g}{f}\right)_{T}=0,\\
\left(\displaystyle\frac{g}{f}\right)_{XTT}+2\left(\log{f}\right)_{TT}\left(\displaystyle\frac{g}{f}\right)_{X}+4\left(\log{f}\right)_{XT}\left(\displaystyle\frac{g}{f}\right)_{T}+c\left(\left(\displaystyle\frac{g}{f}\right)_{XT}+2\left(\log{f}\right)_{XT}\left(\displaystyle\frac{g}{f}\right)\right)-4\left(\displaystyle\frac{g}{f}\right)_{T}=0.\label{2.32}
\end{array}
\right.
\end{equation}
Applying the dependent variable transformation (\ref{2.24}), we can rewrite (\ref{2.32}) as 
\begin{equation}
\left\{
\begin{array}{ll}
\psi_{TT}+2u\psi+c\psi_{T}=0,\\
\psi_{XTT}+2u\psi_{X}+4\left(1-\displaystyle\frac{1}{\rho}\right)\psi_{T}+c\left(\psi_{XT}+2\left(1-\displaystyle\frac{1}{\rho}\right)\psi\right)-4\psi_{T}=0.\end{array}
\right.
\label{2.27}
\end{equation}
Next, we eliminate $\psi$ from (\ref{2.27}).
Subtracting the $X$-derivative of the first equation of (\ref{2.27}) from the second equation of (\ref{2.27}),  we have
\begin{equation}
\frac{\psi_{T}}{\psi}=-\frac{\rho}{2}u_{X}+\frac{1}{2}\rho c-\frac{1}{2}c.\label{2.29}
\end{equation}
From the first equation of (\ref{2.27}), we obtain 
\begin{equation}
\left(\frac{\psi_{T}}{\psi}\right)_{T}+\left(\frac{\psi_{T}}{\psi}\right)^{2}+2u+c\frac{\psi_{T}}{\psi}=0.\label{2.30}
\end{equation}
Substituting (\ref{2.29}) into (\ref{2.30}) and rearranging, we have
\begin{equation}
\rho_{T}u_{X}+\rho u_{XT}-\frac{1}{2}(\rho u_{X})^{2}+c(\rho^{2}u_{X}-\rho_{T})-4u-\frac{c^{2}}{2}\rho^{2}+\frac{c^{2}}{2}=0.\label{2.31}
\end{equation}
Now, we obtain the conservation law 
\begin{equation}
u_{X}=\left(1-\frac{1}{\rho}\right)_{T}, \label{conlow}
\end{equation}
from the dependent variable transformation (\ref{2.24}) and it leads to
\begin{equation}
\rho_{T}=\rho^{2}u_{X}. \label{2.33}
\end{equation} 
Substituting (\ref{2.33}) into (\ref{2.31}), we have 
\begin{equation}
(\rho u_{X})_{T}-\frac{1}{2}(\rho u_{X})^{2}-4u-\frac{c^{2}}{2}\rho^{2}+\frac{c^{2}}{2}=0.\label{2.34}
\end{equation}
Applying the differential law
\begin{equation}
\partial_{X}=\frac{1}{\rho}\partial_{x},\quad\partial_{T}=\partial_{t}-u\partial_{x},\label{diflaw}
\end{equation}
which is derived from the hodograph transformation (\ref{hd}), to (\ref{conlow}) and (\ref{2.34}), we obtain
\begin{eqnarray}
\left\{
\begin{array}{ll}
u_{tx}-uu_{xx}-\displaystyle\frac{1}{2}u_{x}^{2}-4u-\displaystyle\frac{c^{2}}{2}\rho^{2}+\displaystyle\frac{c^{2}}{2}=0,\\
\rho_{t}=(\rho u)_{x},\nonumber
 \end{array}
\right.
\label{2.37}
\end{eqnarray}
which is exactly the 2-HS equation (\ref{2HS}).
\end{proof}


\begin{example}
The reduced Wronskian $\tau$-functions yield the explicit one- and two-soliton solutions of 
the 2-HS equation through the dependent variable transformation 
\begin{equation}
u=(\log{f})_{TT},\quad \rho=\frac{1}{1-(\log f)_{XT}},
\end{equation}
and the hodograph transformation 
\begin{equation}
x=X-(\log{f})_{T},\quad t=T.
\end{equation}

\vspace{1cm}
\noindent
{\bf One-soliton:}
The $\tau$-functions for the one-soliton solution are given below.
\begin{eqnarray}
f=e^{\xi_{1}}+e^{\eta_{1}}\propto1+e^{\xi_{1}-\eta_{1}},\quad
g=p_{1}e^{\xi_{1}}+q_{1}e^{\eta_{1}}\propto1+\frac{p_{1}}{q_{1}}e^{\xi_{1}-\eta_{1}},
\label{2.47}
\end{eqnarray}
where 
\begin{equation}
\xi_{1}=p_{1}X+\frac{1}{p_{1}}T+\xi_{1,0},\quad 
\eta_{1}=q_{1}X+\frac{1}{q_{1}}T+\eta_{1,0}.
\end{equation}
The parameters $p_{1}$ and $q_{1}$ satisfy $\frac{1}{p_{1}}+\frac{1}{q_{1}}=c$.
\\
This yields the one-soliton solution of the 2-HS equation
\begin{equation}
u=\left(\frac{p_{1}-q_{1}}{2p_{1}q_{1}}\right)^2 \mathrm{sech}^{2}\frac{\theta}{2},
\qquad\rho=\left(1+\frac{(p_{1}-q_{1})^{2}}{4p_{1}q_{1}} \mathrm{sech}^{2}\frac{\theta}{2}\right)^{-1},
\end{equation}
\begin{equation}
x=X-\frac{1+ \mathrm{tanh}\frac{\theta}{2}}{p_{1}\left(1+ \mathrm{tanh}\frac{\theta-\theta^{\prime}}{2}\right)},\quad t=T,
\end{equation}
\begin{equation}
\theta=\left(p_{1}-q_{1}\right)X+\left(\frac{1}{p_{1}}-\frac{1}{q_{1}}\right)T+\xi_{1,0}-\eta_{1,0},\quad \theta^{\prime}=\log{\frac{p_{1}}{q_{1}}},
\label{2.54}
\end{equation}
where $q_{1}=\frac{p_{1}}{-1 + c p_{1}}$.
The amplitude of the $u$-component soliton is $\left(\frac{cp_{1}-2}{2p_{1}}\right)^{2}$, and 
its velocity in the $(X,T)$ plane is $\frac{cp_{1}-1}{p_{1}^{2}}$. 

Figure~\ref{fig2.1} shows the one-soliton solution for $t=0$.
The soliton moves to the right along the $x$-axis as time progresses.


\begin{figure}[h]
\centering
         \includegraphics[keepaspectratio, scale=0.5]{
         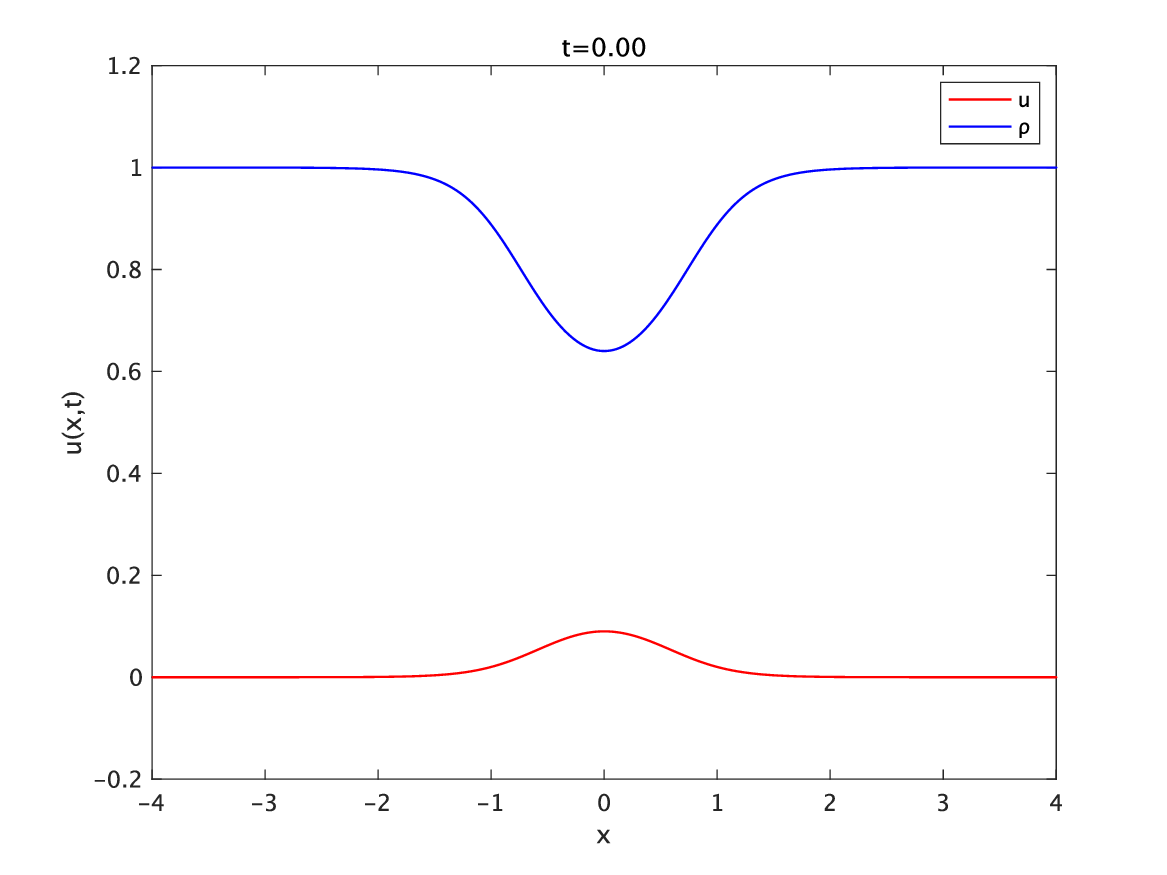}
         \caption{The one-soliton solution of the 
         two-component Hunter--Saxton equation (red (lower) line: $u$, blue (upper) line: $\rho$). 
         The parameters are $p_{1}=5, c=1, \eta_{1,0}=\xi_{1,0}=0$.}
         \label{fig2.1}

\end{figure}

\noindent	
{\bf Two-soliton:}
The $\tau$-functions for the two-soliton solution are given below.
In writing the proportional forms, we first remove common gauge
factors, which do not affect $u$ or $\rho$ because these variables are defined
by logarithmic derivatives of the $\tau$-functions. The nonzero coefficients
of the two one-soliton exponential terms are then absorbed into shifts of the
arbitrary phase constants, whereas the resulting interaction coefficient is
retained.
\begin{eqnarray}
&&f=(p_{2}-p_{1})e^{\xi_{1}+\xi_{2}}+(q_{2}-p_{1})e^{\xi_{1}+\eta_{2}}+(p_{2}-q_{1})e^{\xi_{2}+\eta_{1}}+(q_{2}-q_{1})e^{\eta_{1}+\eta_{2}}\nonumber\\
&&\qquad \propto 1+e^{\xi_{1}-\eta_{1}}+e^{\xi_{2}-\eta_{2}}+\frac{(p_{2}-p_{1})(q_{2}-q_{1})}{(q_{2}-p_{1})(p_{2}-q_{1})}e^{\xi_{1}+\xi_{2}-\eta_{1}-\eta_{2}},
\end{eqnarray}
\begin{eqnarray}
&&g=p_{1}p_{2}(p_{2}-p_{1})e^{\xi_{1}+\xi_{2}}+p_{1}q_{2}(q_{2}-p_{1})e^{\xi_{1}+\eta_{2}}+p_{2}q_{1}(p_{2}-q_{1})e^{\xi_{2}+\eta_{1}}
+q_{1}q_{2}(q_{2}-q_{1})e^{\eta_{1}+\eta_{2}}\nonumber\\
&&\qquad \propto 1+\frac{p_{1}}{q_{1}}e^{\xi_{1}-\eta_{1}}+\frac{p_{2}}{q_{2}}e^{\xi_{2}-\eta_{2}}
+\frac{p_{1}p_{2}(p_{2}-p_{1})(q_{2}-q_{1})}{q_{1}q_{2}(q_{2}-p_{1})(p_{2}-q_{1})}e^{\xi_{1}+\xi_{2}-\eta_{1}-\eta_{2}},
\end{eqnarray}
where
\begin{equation}
\xi_{i}=p_{i}X+\frac{1}{p_{i}}T+\xi_{i,0},\quad
\eta_{i}=q_{i}X+\frac{1}{q_{i}}T+\eta_{i,0} \quad{\rm for}\quad i=1,2.
\end{equation}
The parameters $p_{i}$ and $q_{i}$ satisfy $\frac{1}{p_{i}}+\frac{1}{q_{i}}=c$ for $i=1,2$.

Figure~\ref{fig2.2} shows the two-soliton interaction of $u$ and
$\rho$. Both solitons move to the right along the $x$-axis as time progresses.


\begin{figure}[h]
  \begin{tabular}{cc}
      \begin{minipage}[t]{0.45\hsize}
              \includegraphics[width=\linewidth]{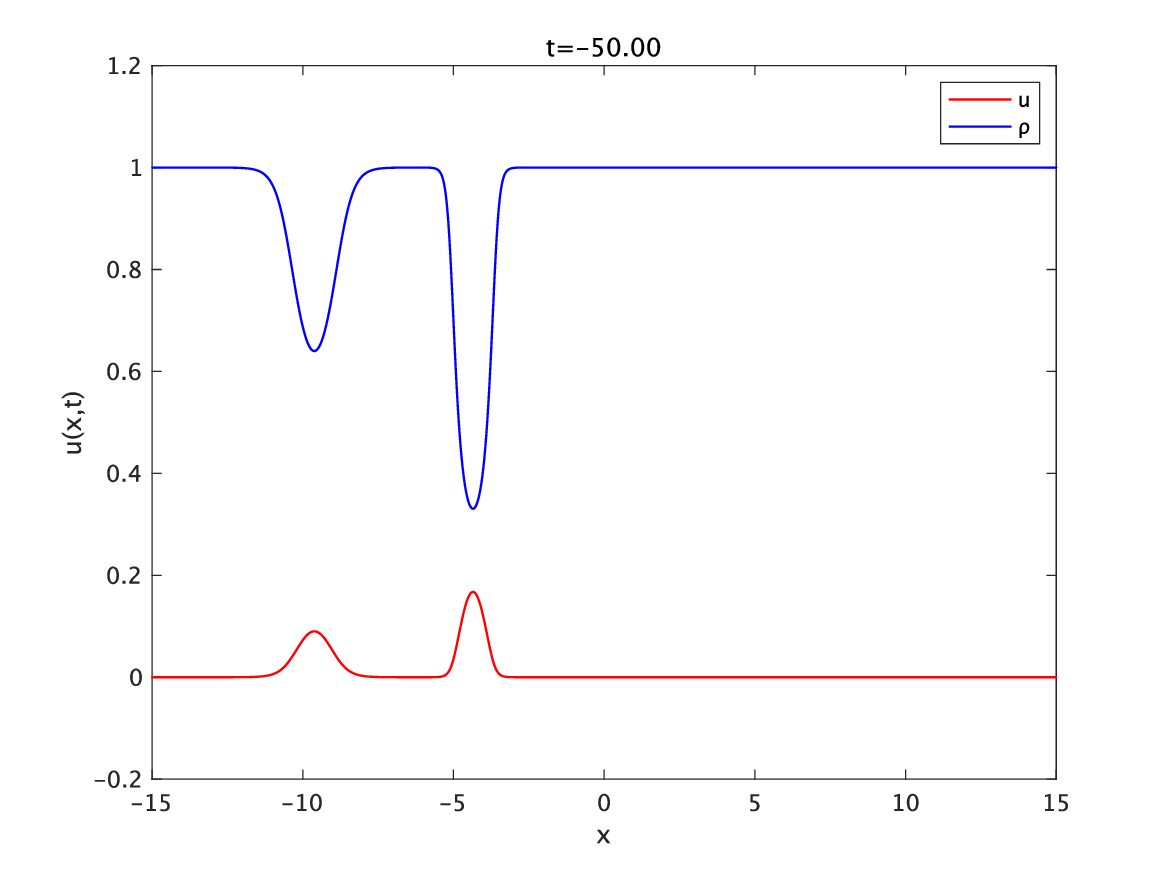}
        \captionsetup{labelformat=empty,labelsep=none}
      \end{minipage}
                \begin{minipage}[t]{0.45\hsize}
        \includegraphics[width=\linewidth]{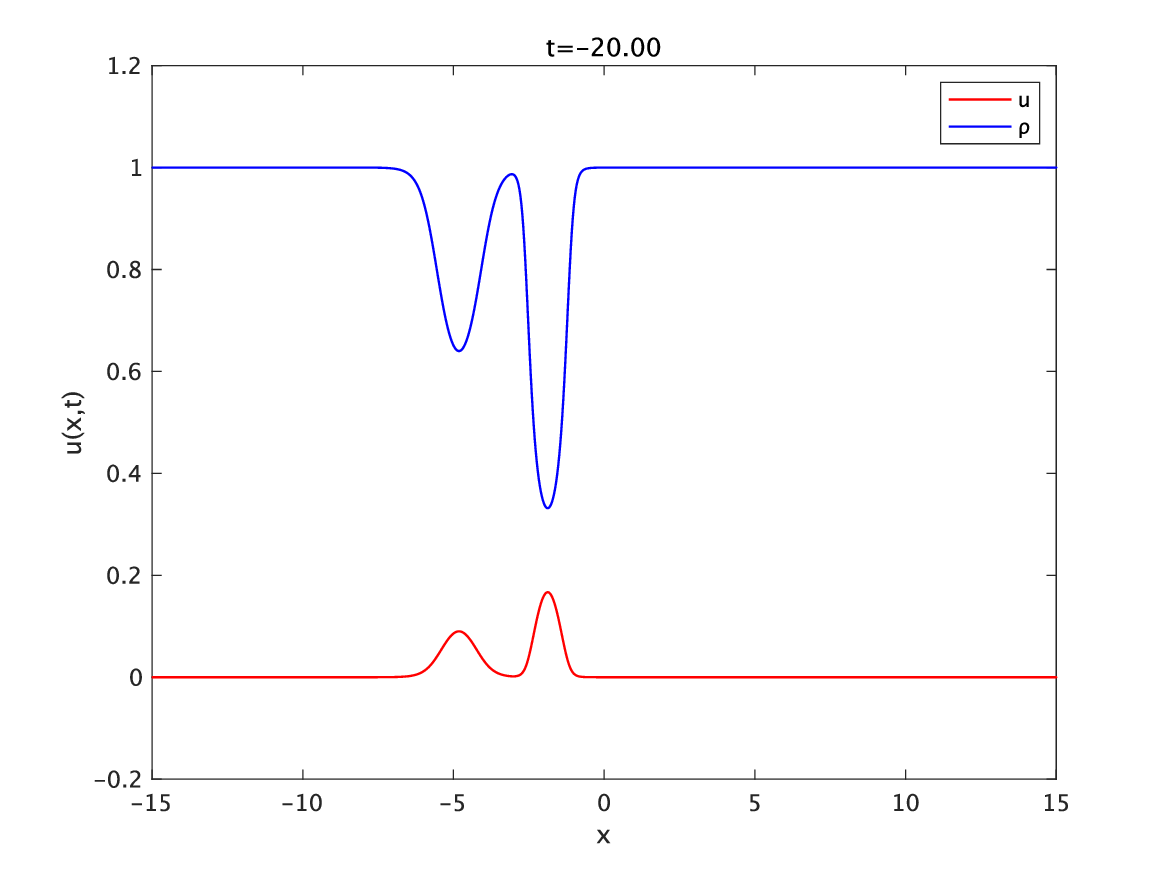}
        \captionsetup{labelformat=empty,labelsep=none}
      \end{minipage}
                     \end{tabular}
\\

 \begin{tabular}{cc}
      \begin{minipage}[t]{0.45\hsize}
        \includegraphics[width=\linewidth]{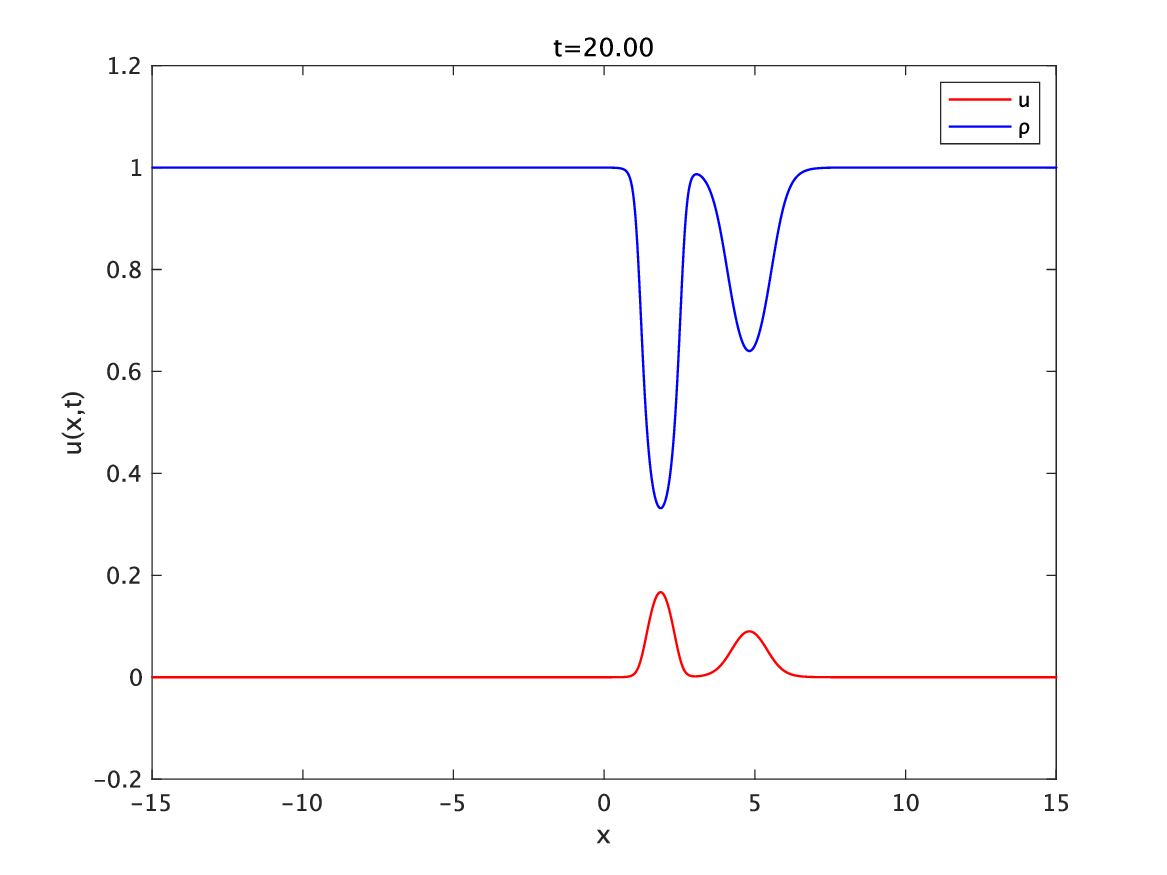}
        \captionsetup{labelformat=empty,labelsep=none}
      \end{minipage}

      \begin{minipage}[t]{0.45\hsize}
        \includegraphics[width=\linewidth]{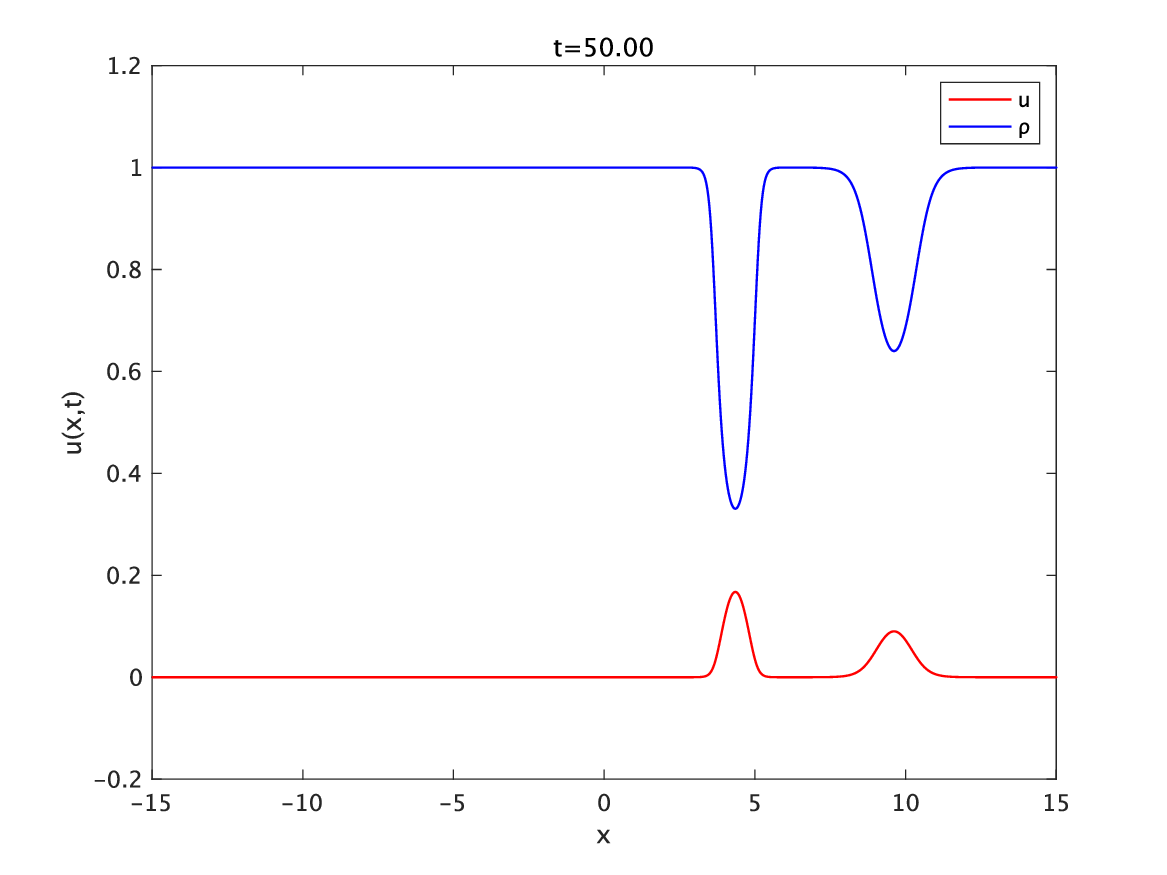}
        \captionsetup{labelformat=empty,labelsep=none}
      \end{minipage} 
         \end{tabular}
         \caption{The two-soliton interaction of the 2-HS equation 
         (red (lower) line: $u$, blue (upper) line: $\rho$). 
         The parameters are $p_{1}=1.1, p_{2}=1.25, c=1, \xi_{1,0}=\frac{1}{2}
         \log{((q_{1} - p_{2}) (q_{1}- q_{2}))}, \eta_{1,0}= \frac{1}{2} 
         \log{((p_{2} - p_{1}) (q_{2} - p_{1}))} + \pi \mathrm{i}, \xi_{2,0}=\frac{1}{2}\log{((q_{2} - p_{1}) (q_{1} - q_{2}))}, 
         \eta_{2,0}=\frac{1}{2}\log{((p_{2} - p_{1}) (q_{1} - p_{2}))}$.}
         \label{fig2.2}
         
  \end{figure}

\end{example}

\begin{remark}
In the case of the one-soliton solution with $c>0$, $u$ is a smooth soliton 
if $p_{1} > \frac{1}{c}$, whereas $u$ is a loop soliton if $p_{1}<\frac{1}{c}$. 
However, in the latter case,  $\rho$ diverges. Therefore, we restrict the parameters so that both $u$ and $\rho$ remain smooth, 
leading to the condition $p_{1} > \frac{1}{c}$. Similarly, in the case of the two-soliton solution, 
the parameters are restricted to $p_{1}, p_{2}>\frac{1}{c}$.
\end{remark}

\begin{remark}
In the limit $c\rightarrow0$, (\ref{2.17}) reduces to the bilinear equations 
of the HS equation. For the loop-soliton branch, taking the same limit yields 
the cuspon solution of the HS equation. Details are described in the Appendix.
\end{remark}

\section{An integrable semi-discrete analogue of the 2-HS equation}
\label{3}


\begin{lemma}
The bilinear equations 
\begin{eqnarray}
\left\{
\begin{array}{ll}
(D_{T}^{2}+D_{S})g_{k}\cdot f_{k}=0\\
 (D_{T}^{2}+D_{S}-2a D_{T})g_{k}\cdot f_{k-1}=0
 \end{array}
\right.
\label{3.1}
\end{eqnarray}
where $k$ is a discrete site index and $a$ is a lattice parameter, admit the Casoratian solution 
\begin{equation}
f_{k}=\tau_{0}(k),\quad g_{k}=\tau_{1}(k),\qquad
\tau_{n}(k)=
\left|
\begin{array}{rrrr}
\varphi_{1}^{(n)}(k)& \varphi_{1}^{(n+1)}(k)& \cdots & \varphi_{1}^{(n+N-1)}(k) \\
\varphi_{2}^{(n)}(k)& \varphi_{2}^{(n+1)}(k)& \cdots & \varphi_{2}^{(n+N-1)}(k) \\
\vdots& \vdots& \ddots & \vdots \\
\varphi_{N}^{(n)}(k)& \varphi_{N}^{(n+1)}(k)& \cdots & \varphi_{N}^{(n+N-1)}(k)
\label{3.3}
\end{array}
\right|,
\end{equation}
where $\varphi_{i}^{(n)}(k)=\varphi_{i}^{(n)}(k,T,S)$ satisfies 
the following linear dispersion relations
\begin{equation}
\frac{\varphi_{i}^{(n)}(k)-\varphi_{i}^{(n)}(k-1)}{a}=\varphi_{i}^{(n+1)}(k),\label{3.4}
\end{equation}
and
\begin{equation}
\frac{\partial\varphi_{i}^{(n)}(k)}{\partial T}=\varphi_{i}^{(n-1)}(k),\quad
\frac{\partial\varphi_{i}^{(n)}(k)}{\partial S}=\varphi_{i}^{(n-2)}(k).
\end{equation}
In particular, the specific choice of $\varphi_{i}^{(n)}(k)$ given by
\begin{equation}
\varphi_{i}^{(n)}(k)=p_{i}^{n}(1-ap_{i})^{-k}e^{\frac{1}{p_{i}}T+\frac{1}{p_{i}^{2}}S+\xi_{i,0}}
+q_{i}^{n}(1-aq_{i})^{-k}e^{\frac{1}{q_{i}}T+\frac{1}{q_{i}^{2}}S+\eta_{i,0}},\label{phi_dis}
\end{equation}
satisfies the linear dispersion relations and yields the N-soliton solution.
\end{lemma}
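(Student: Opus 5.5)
The first equation of (\ref{3.1}) involves only $T$ and $S$, and the $T$- and $S$-dispersion relations are the same as in (\ref{2.4}), with $k$ a spectator. So the derivative formulas (\ref{2.72})--(\ref{2.74}) hold verbatim for $\tau_n(k)$ at fixed $k$. The Plücker identity (\ref{det1}) then gives (\ref{difbilinear1}) at each site $k$, and setting $n=0$ yields $(D_T^2+D_S)g_k\cdot f_k=0$ exactly as in the proof of Lemma 1. The explicit choice (\ref{phi_dis}) satisfies (\ref{3.4}) because $p^n(1-ap)^{-k}-p^n(1-ap)^{-k+1}=ap^{n+1}(1-ap)^{-k}$. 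The $T$- and $S$-relations are immediate, as in the continuous case.

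For the second equation the task is to express the shifted function $f_{k-1}=\tau_0(k-1)$ through columns at site $k$. By (\ref{3.4}), $\varphi^{(j)}(k-1)=\varphi^{(j)}(k)-a\varphi^{(j+1)}(k)$. Eliminating the columns successively, I plan to establish the bordered representation
\begin{equation*}
\tau_n(k-1)=a^{N}\left|\begin{array}{cccc}
\varphi^{(n)}(k)&\varphi^{(n+1)}(k)&\cdots&\varphi^{(n+N)}(k)\\
1&a^{-1}&\cdots&a^{-N}
\end{array}\right| ,
\end{equation*}
an $(N+1)\times(N+1)$ determinant whose first $N$ rows are the $\varphi_i$. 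I would check it for $N=1$, where it gives $\varphi^{(n)}-a\varphi^{(n+1)}$, and then prove it in general by expanding along the last row. Equivalently, it is the column reduction that turns the $N\times N$ Casoratian at $k-1$ into a sum $\sum_j(-a)^{j}|\,n,\dots,\widehat{n+N-j},\dots,n+N\,|$. In this form $\tau_0(k-1)$ becomes a combination of Wronskian-type minors at site $k$ with one column index omitted. Its $T$- and $S$-derivatives can then be written with the same rule as (\ref{2.72})--(\ref{2.74}), using the bordered notation with the fixed border row.

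Next, write $D_T^2+D_S-2aD_T$ acting on $g_k\cdot f_{k-1}$ in terms of these bordered determinants. In parallel with (\ref{det1}), take the $2(N+1)\times 2(N+1)$ vanishing Laplace determinant. Its columns are $n-1,n$, the block $n+2,\dots,n+N$, $n+1$, $\varnothing$ and the border column, arranged so that one half contains the border row and the other half is an ordinary Casoratian. This gives a three-term Plücker relation mixing $\tau_n(k),\tau_{n+1}(k)$ with bordered versions of $\tau_n(k-1),\tau_{n+1}(k-1)$. The terms with the factor $a$ arise because moving a $T$-derivative through the border produces $a^{-1}$-weighted columns. Comparing with the expansion of $(D_T^2+D_S-2aD_T)\tau_{1}(k)\cdot\tau_{0}(k-1)$ should then give the second equation at $n=0$.

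The main obstacle is this last step. One must choose the right bordered Plücker identity and track the powers of $a$ and the signs so that the combination is exactly $D_T^2+D_S-2aD_T$, rather than some other linear combination. If the bordered route becomes unwieldy, I would try the alternative of using (\ref{3.4}) to write $\tau_1(k)$ as an $a^{-1}$-difference quotient of $\tau_0$. That would reduce the second equation to the first equation of Lemma 1 combined with a Miwa-shifted version of (\ref{difbilinear1}). I would verify the resulting identity first on $N=1$ with (\ref{phi_dis}).
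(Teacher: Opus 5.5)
\textbf{There is a genuine gap in the second bilinear equation.} Your treatment of the first equation and of the explicit choice (\ref{phi_dis}) is fine. Your bordered formula for $\tau_n(k-1)$ is also correct: it is Cauchy--Binet applied to $\varphi^{(j)}(k-1)=\varphi^{(j)}(k)-a\varphi^{(j+1)}(k)$.

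The problem is the second equation, which is the only new content of the lemma. You never write down the identity that produces it, and the mechanism you sketch does not work as stated.

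With the constant border row $(1,a^{-1},\dots,a^{-N})$, the bordered determinant does \emph{not} obey (\ref{2.72})--(\ref{2.74}). Already for $N=1$:
\begin{itemize}
\item $\tau_n(k-1)=\varphi^{(n)}-a\varphi^{(n+1)}$, so $\partial_T\tau_n(k-1)=\varphi^{(n-1)}-a\varphi^{(n)}$;
\item shifting the first column to $n-1$ and keeping the border gives $\varphi^{(n-1)}-a\varphi^{(n+1)}$ instead.
\end{itemize}
In general one finds $\partial_T\tau_n(k-1)=B_{n-1}-a\,\tau_n(k-1)+(a-1)(-a)^{N}\tau_{n+1}(k)$, where $B_{n-1}$ is your bordered determinant with first column $n-1$. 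So the $D_T^2+D_S$ part cannot simply be matched with a (\ref{det1})-type identity, and the $-2aD_T$ term does not arise the way you describe.

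Your description of the Pl\"ucker relation is also off target. For $n=0$ the second equation involves only $\tau_{n+1}(k)=g_k$, $\tau_n(k-1)=f_{k-1}$ and their $T$- and $S$-derivatives, not $\tau_n(k)$ or $\tau_{n+1}(k-1)$. Your fallback, writing $\tau_1(k)$ as a difference quotient of $\tau_0$, is only true for $N=1$.

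\textbf{The missing idea (the paper's route).} Leave the site-$(k-1)$ factors alone and reach the site-$k$ factors through mixed Casoratians whose last column is $n+N_k$.
\begin{itemize}
\item Take the $2N\times2N$ Laplace identity with columns $n_{k-1}$, $n+1_{k-1}$, the block $n+2_{k-1},\dots,n+N-1_{k-1}$, $n-1_{k-1}$, $\varnothing$, and a final column $n+N_k$. This gives (\ref{3.88}).
\item By (\ref{3.4}) and column operations from the right, $|n+1_{k-1},\dots,n+N-1_{k-1},n+N_k|=\tau_{n+1}(k)$.
\item The two determinants with a gap reduce to $|n-1_{k-1},n+2_{k-1},\dots,n+N_k|=|n-1_k,n+2_k,\dots,n+N_k|-a|n_k,n+2_k,\dots,n+N_k|$ and $|n_{k-1},n+2_{k-1},\dots,n+N_k|=|n_k,n+2_k,\dots,n+N_k|-a\,\tau_{n+1}(k)$.
\item Using (\ref{3.111})--(\ref{3.113}) and multiplying by $2$, the uncorrected terms assemble into $(D_T^2+D_S)\tau_{n+1}(k)\cdot\tau_n(k-1)$. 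The two $-a$ corrections give exactly $-2aD_T\tau_{n+1}(k)\cdot\tau_n(k-1)$.
\end{itemize}

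\textbf{Repairing your bordered route.} Make the border row depend on the column label and on $T,S$, namely $a^{-m}e^{aT+a^2S}$, which satisfies the $T$- and $S$-dispersion relations. Then:
\begin{itemize}
\item $\tau_n(k-1)=a^{n+N}e^{-aT-a^2S}\hat\tau_n$, where $\hat\tau_n$ is an $(N+1)$-row determinant obeying (\ref{2.72})--(\ref{2.74});
\item the gauge factor turns $D_T^2+D_S-2aD_T$ into $D_T^2+D_S$;
\item writing $\tau_{n+1}(k)=|n+1,\dots,n+N,\mathbf e|$ with $\mathbf e=(0,\dots,0,1)^{T}$ reduces the claim to the three-term Pl\"ucker relation with common block $n+2,\dots,n+N$ and extra columns $\mathbf e,n-1,n,n+1$.
\end{itemize}
Either way, this step has to be carried out; as written, your proposal only anticipates it.
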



\begin{proof}
For simplicity, we introduce the following notation
\begin{equation}
|m_{1\,k},m_{2\,k},\cdots,m_{N\,k}|=
\left|
\begin{array}{rrrr}
\varphi_{1}^{(m_{1})}(k)& \varphi_{1}^{(m_{2})}(k)& \cdots & \varphi_{1}^{(m_{N})}(k) \\
\varphi_{2}^{(m_{1})}(k)& \varphi_{2}^{(m_{2})}(k)& \cdots & \varphi_{2}^{(m_{N})}(k) \\
\vdots \quad& \vdots \quad & \ddots & \vdots \quad\quad\\
\varphi_{N}^{(m_{1})}(k)& \varphi_{N}^{(m_{2})}(k)& \cdots & \varphi_{N}^{(m_{N})}(k) \\
\end{array}
\right|.
\end{equation}
In this notation, $\tau_{n}(k)$ is rewritten as 
\begin{equation}
\tau_{n}(k)=|n_{k},n+1_{k},\cdots, n+N-1_{k}|.
\end{equation}
The proof of the first equation of (\ref{3.1}) is the same as for the continuous case. 
We prove that $\tau_{n}(k)$ satisfies the second equation of (\ref{3.1}).
\begin{eqnarray}
\partial_{T}\tau_{n}(k)=|n-1_{k},n+1_{k},\cdots,n+N-1_{k}|,\label{3.111}
\end{eqnarray}
\begin{eqnarray}
\partial_{S}\tau_{n}(k)=|n-2_{k},n+1_{k},\cdots,n+N-1_{k}|-|n-1_{k},n_{k},n+2_{k},\cdots,n+N-1_{k}|,\label{3.112}
\end{eqnarray}
\begin{eqnarray}
\partial_{T}^{2}\tau_{n}(k)=|n-2_{k},n+1_{k},\cdots,n+N-1_{k}|+|n-1_{k},n_{k},n+2_{k},\cdots,n+N-1_{k}|.\label{3.113}
\end{eqnarray}

Let us introduce an identity for the $2N\times2N$ determinant
\begingroup\footnotesize
\begin{eqnarray}
\left|
\begin{array}{c:c:c:c:c:c}
n_{k-1}&n+1_{k-1}&n+2_{k-1}\quad\cdots\quad n+N-1_{k-1}&n-1_{k-1} &\varnothing&n+N_{k}  \\\hdashline
n_{k-1}&n+1_{k-1}&\varnothing&n-1_{k-1}&n+2_{k-1}\quad\cdots\quad n+N-1_{k-1}&n+N_{k} \\
\end{array}
\right|=0.
\end{eqnarray}
\endgroup
Applying the Laplace expansion to the left-hand side yields an algebraic bilinear identity for determinants
\begin{equation}
\begin{aligned}
&|n_{k-1},\cdots,n+N-1_{k-1}|\times|n-1_{k-1},n+2_{k-1},\cdots,n+N_{k}|\\
&\qquad -|n-1_{k-1},n+1_{k-1},\cdots,n+N-1_{k-1}|\times|n_{k-1},n+2_{k-1},\cdots,n+N_{k}|\\
&\qquad +|n-1_{k-1},n_{k-1},n+2_{k-1},\cdots,n+N-1_{k-1}|\times|n+1_{k-1},\cdots,n+N_{k}|=0.
\end{aligned}
\label{3.88}
\end{equation}
By imposing the linear dispersion relation (\ref{3.4}), (\ref{3.88}) reduces to
\begin{equation}
\begin{aligned}
&|n_{k-1},\cdots,n+N-1_{k-1}|\times|n-1_{k},n+2_{k},\cdots,n+N_{k}|\\
&\qquad -a|n_{k-1},\cdots,n+N-1_{k-1}|\times|n_{k},n+2_{k},\cdots,n+N_{k}|\\
&\qquad -|n-1_{k-1},n+1_{k-1},\cdots,n+N-1_{k-1}|\times|n_{k},n+2_{k},\cdots,n+N_{k}|\\
&\qquad +a|n-1_{k-1},n+1_{k-1},\cdots,n+N-1_{k-1}|\times|n+1_{k},\cdots,n+N_{k}|\\
&\qquad +|n-1_{k-1},n_{k-1},n+2_{k-1},\cdots,n+N-1_{k-1}|\times|n+1_{k},\cdots,n+N_{k}|=0.
\end{aligned}
\label{3.9}
\end{equation}
From (\ref{3.111})--(\ref{3.113}) and (\ref{3.9}), we obtain
\begin{equation}
\begin{aligned}
&\partial_{T}^{2}\tau_{n}(k-1)\times \tau_{n+1}(k)+\tau_{n}(k-1)\times\partial_{T}^{2}\tau_{n+1}(k)-2\partial_{T}\tau_{n}(k-1)\times \partial_{T}\tau_{n+1}(k)\\
&\qquad +\partial_{S}\tau_{n+1}(k)\times \tau_{n}(k-1)-\partial_{S}\tau_{n}(k-1)\times \tau_{n+1}(k)\\
&\qquad -2a(\partial_{T}\tau_{n+1}(k)\times \tau_{n}(k-1)-\tau_{n+1}(k)\times\partial_{T}\tau_{n}(k-1))=0.
\end{aligned}
\label{3.13}
\end{equation}
By setting $n = 0, f_{k} = \tau_{0}(k)$ and $g_{k} = \tau_{1}(k)$ in (\ref{3.13}), the second equation of (\ref{3.1}) is obtained.

\end{proof}

To obtain the soliton solutions of the semi-discrete 2-HS equation, we impose the pseudo 2-reduction on the Casoratian solution in Lemma 2 by requiring
\begin{eqnarray}
\frac{1}{p_{i}}+\frac{1}{q_{i}}=c,
\label{2pps}
\end{eqnarray}
for each pair of soliton parameters $p_{i}$ and $q_{i}$ in (\ref{phi_dis}). With $f_{k}=\tau_{0}(k)$ and $g_{k}=\tau_{1}(k)$ as in Lemma 2, this reduction yields the following bilinear identities:
\begin{eqnarray}
D_{S}g_{k}\cdot f_{k}&=&cD_{T}g_{k}\cdot f_{k},\nonumber\\
D_{S}g_{k}\cdot f_{k-1}&=&cD_{T}g_{k}\cdot f_{k-1}.\label{2ccon}
\end{eqnarray}
Using these identities in the bilinear equations (\ref{3.1}), we obtain the following bilinear equations 
\begin{eqnarray}
\left\{
\begin{array}{ll}
(D_{T}^{2}+cD_{T})g_{k}\cdot f_{k}=0,\\
(D_{T}^{2}+cD_{T}-2a D_{T})g_{k}\cdot f_{k-1}=0.
\end{array}
\right.
\nonumber
\end{eqnarray}
After obtaining these reduced bilinear equations, 
the $S$-dependence can be absorbed into the phase constants, 
and we fix $S=0$ in the reduced $\tau$-functions.


\begin{theorem}
\label{thm:semi-discrete-2hs}
The semi-discrete analogue of the 2-HS equation
\begin{equation}
\left\{
\begin{aligned}
&\frac{d}{dT}(u_{k}-u_{k-1})
=-\frac{u_{k}-u_{k-1}}{\delta_{k}}
 \left((u_{k}-u_{k-1})+(\delta_{k}-a)(c-a-\delta_{k})\right)\\
&\qquad +\frac{1}{2\delta_{k}}
 \left((u_{k}-u_{k-1})+(\delta_{k}-a)(c-a-\delta_{k})\right)^{2}
 +4u_{k}\delta_{k}\\
&\qquad -c\left((u_{k}-u_{k-1})+(\delta_{k}-a)(c-a-\delta_{k})\right)
 -(u_{k}-u_{k-1})(2\delta_{k}-c),\\
&\frac{dx_{k}}{dT}=-u_{k},
\end{aligned}
\right.
\label{semi-2HS}
\end{equation}
is derived from a set of bilinear equations
\begin{eqnarray}
\left\{
\begin{array}{ll}
(D_{T}^{2}+cD_{T})g_{k}\cdot f_{k}=0,\\
(D_{T}^{2}+cD_{T}-2a D_{T})g_{k}\cdot f_{k-1}=0,
\end{array}
\right.
\label{semi-bilinear-system}
\end{eqnarray}
through the dependent variable transformation
\begin{equation}
\psi_{k}=\frac{g_{k}}{f_{k}},\quad u_{k}=(\log{f_{k}})_{TT},\quad \frac{1}{\rho_{k}}=1-\frac{1}{a}\left(\log{\frac{f_{k}}{f_{k-1}}}\right)_{T},\label{ddt}
\end{equation}
together with the discrete hodograph transformation
\begin{equation}
x_{k}=x_{0}(T)+\sum_{l=1}^{k}\frac{a}{\rho_{l}},
\label{hoddis}
\end{equation}
where the left-edge mesh point satisfies the consistency condition
\begin{equation}
\frac{dx_{0}(T)}{dT}=-u_{0}(T).
\label{dcondition}
\end{equation}
Here, $\delta_{k}=x_{k}-x_{k-1}$, and $u_{k}$ denotes the value of $u$ at $x_{k}$.
\end{theorem}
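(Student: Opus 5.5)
We follow the route of the proof of Theorem~\ref{thm:continuous-2hs}: divide the bilinear equations by the appropriate product of $\tau$-functions, pass to the variables $\psi_k$, $u_k$, $\rho_k$, eliminate $\psi_k$, and read off the result in terms of the mesh via the discrete hodograph transformation.

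\textbf{Step 1: linear equations for $\psi_k$.} Dividing the first equation of (\ref{semi-bilinear-system}) by $f_k^2$ gives, exactly as in (\ref{2.27}),
\[
\psi_{k,TT}+2u_k\psi_k+c\psi_{k,T}=0 .
\]
Dividing the second equation by $f_kf_{k-1}$ and writing $g_k/f_{k-1}=\psi_k\,(f_k/f_{k-1})$, we use the standard formula for $D_T^2$ and $D_T$ acting on $g_k\cdot f_{k-1}$. Set $v_k=(\log f_k)_T$, so that $u_k=v_{k,T}$ and $v_k-v_{k-1}=a(1-1/\rho_k)$. The second equation becomes
\[
\psi_{k,TT}+2(v_k-v_{k-1})\psi_{k,T}+\bigl[(v_k-v_{k-1})^2+u_k+u_{k-1}\bigr]\psi_k+(c-2a)\bigl[\psi_{k,T}+(v_k-v_{k-1})\psi_k\bigr]=0 .
\]
This should be checked carefully against the convention $D_T\,g\cdot f=g_Tf-gf_T$.

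\textbf{Step 2: eliminate $\psi_k$.} Subtract the first equation from the second; this removes $\psi_{k,TT}$. Put $w_k=v_k-v_{k-1}=a-a/\rho_k$ and note that the discrete hodograph transformation (\ref{hoddis}) gives $a/\rho_k=\delta_k$, hence $w_k=a-\delta_k$. The difference is then algebraic in $\Psi_k:=\psi_{k,T}/\psi_k$:
\[
2(w_k-a)\Psi_k+w_k^2+u_{k-1}-u_k+(c-2a)w_k=0,
\]
which can be solved as
\[
\Psi_k=\frac{(u_k-u_{k-1})+(\delta_k-a)(c-a-\delta_k)}{-2\delta_k}.
\]
This is the discrete counterpart of (\ref{2.29}). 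The combination $(u_k-u_{k-1})+(\delta_k-a)(c-a-\delta_k)$ appearing in (\ref{semi-2HS}) emerges naturally here, which is a good consistency check.

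\textbf{Step 3: substitute into the Riccati form.} Substitute $\Psi_k$ into the Riccati form of the first equation,
\[
\Psi_{k,T}+\Psi_k^2+2u_k+c\Psi_k=0 .
\]
Computing $\Psi_{k,T}$ requires $\delta_{k,T}$. Differentiating (\ref{hoddis}) with $a/\rho_l=a-(v_l-v_{l-1})$ telescopes to
\[
x_{k,T}=x_{0,T}-(u_k-u_0),
\]
so the consistency condition (\ref{dcondition}) yields $dx_k/dT=-u_k$, the second equation of (\ref{semi-2HS}). It also gives $\delta_{k,T}=-(u_k-u_{k-1})$, the discrete conservation law replacing (\ref{2.33}). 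Inserting this into $\Psi_{k,T}$, multiplying through by $-2\delta_k$, and solving for $\frac{d}{dT}(u_k-u_{k-1})$ should reproduce the first equation of (\ref{semi-2HS}), with the term $4u_k\delta_k$ coming from $2u_k\cdot 2\delta_k$.

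\textbf{Main obstacle.} The main difficulty is the bookkeeping in Step 1, where the sign and shift conventions for $g_k\cdot f_{k-1}$ must come out so that the coefficient combination matches. The final rearrangement in Step 3 also needs care with the quadratic term $\Psi_k^2$, which produces the $\frac{1}{2\delta_k}(\cdots)^2$ term. If a sign mismatch appears, I would first recheck the identity $w_k=a-\delta_k$ and the direction of the difference $g_k\cdot f_{k-1}$.
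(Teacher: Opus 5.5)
Your proposal is correct and follows essentially the same route as the paper's proof. You divide by $f_k^2$ and $f_kf_{k-1}$, subtract to solve for $\psi_{k,T}/\psi_k$ (your $\Psi_k$ coincides with the paper's expression once $\delta_k=a/\rho_k$ is substituted), and insert this into the Riccati form with $\delta_{k,T}=u_{k-1}-u_k$, which reproduces the first equation of (\ref{semi-2HS}) exactly. The only minor difference is that you telescope the hodograph sum, $x_k=x_0+ka-(v_k-v_0)$, to obtain $x_{k,T}=-u_k$ directly, whereas the paper first gets $\delta_{k,T}$ from (\ref{3.27}) and then sums to $x_{k,T}+u_k=C(T)$, fixing $C=0$ by (\ref{dcondition}); the two are equivalent.
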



\begin{proof}

Dividing the first and second equations of
(\ref{semi-bilinear-system}) by $f_{k}^{2}$ and $f_{k}f_{k-1}$,
respectively, we obtain
\begin{eqnarray}
\left\{
\begin{array}{ll}
\displaystyle\frac{D_{T}^{2}g_{k}\cdot f_{k}}{f_{k}^{2}}+c\displaystyle\frac{D_{T}g_{k}\cdot f_{k}}{f_{k}^{2}}=0,\\
\displaystyle\frac{D_{T}^{2}g_{k}\cdot f_{k-1}}{f_{k}f_{k-1}}+(c-2a)\displaystyle\frac{D_{T}g_{k}\cdot f_{k-1}}{f_{k}f_{k-1}}=0.
 \end{array}
\right.
\label{3.14}
\end{eqnarray}
The equations (\ref{3.14}) are rewritten as 
\begin{eqnarray}
\left\{
\begin{array}{l}
\left(\displaystyle\frac{g_{k}}{f_{k}}\right)_{TT}+2\left(\log{f_{k}}\right)_{TT}\displaystyle\frac{g_{k}}{f_{k}}+c\left(\displaystyle\frac{g_{k}}{f_{k}}\right)_{T}=0,\\
\left(\displaystyle\frac{g_{k}}{f_{k}}\right)_{TT}+2\left(\left(\displaystyle\frac{g_{k}}{f_{k}}\right)_{T}+\displaystyle\frac{f_{k,T}g_{k}}{f_{k}^{2}}\right)
\left(\displaystyle\frac{f_{k,T}}{f_{k}}-\displaystyle\frac{f_{k-1,T}}{f_{k-1}}\right)+\displaystyle\frac{g_{k}}{f_{k}}\left(\displaystyle\frac{f_{k,TT}}{f_{k}}-\frac{f_{k,T}^{2}}{f_{k}^{2}}\right)\\
\qquad +\displaystyle\frac{g_{k}}{f_{k}}\left((\log{f_{k-1}})_{TT}+\displaystyle\frac{f_{k-1,T}^{2}}{f_{k-1}^{2}}
-\displaystyle\frac{f_{k,T}^{2}}{f_{k}^{2}}\right)+(c-2a)\left(\left(\displaystyle\frac{g_{k}}{f_{k}}\right)_{T}+\displaystyle\frac{g_{k}}{f_{k}}\left(\log{\displaystyle\frac{f_{k}}{f_{k-1}}}\right)_{T}\right)=0.\\
\end{array}
\right.
\label{85}
\end{eqnarray}
Applying the dependent variable transformation (\ref{ddt}), we can reduce (\ref{85}) to
\begin{eqnarray}
\left\{
\begin{array}{l}
\psi_{k,TT}+2u_{k}\psi_{k}+c\psi_{k,T}=0,\\
\psi_{k,TT}+2a\psi_{k,T}\left(1-\displaystyle\frac{1}{\rho_{k}}\right)+\psi_{k}(u_{k}+u_{k-1})\\
\qquad +a^{2}\psi_{k}\left(1-\displaystyle\frac{1}{\rho_{k}}\right)^{2}+(c-2a)\left(\psi_{k,T}+a\psi_{k}\left(1-\displaystyle\frac{1}{\rho_{k}}\right)\right)=0.
\end{array}
\right.
\label{3.16}
\end{eqnarray}
Subtracting the first equation from the second equation of (\ref{3.16}), we obtain
\begin{eqnarray}
&&(u_{k-1}-u_{k})\psi_{k}+2a\left(1-\frac{1}{\rho_{k}}\right)\psi_{k,T}+ca\left(1-\frac{1}{\rho_{k}}\right)\psi_{k}\nonumber\\
&&\qquad -2a\left(\psi_{k,T}+a\left(1-\frac{1}{\rho_{k}}\right)\psi_{k}\right)+a^{2}\left(1-\frac{1}{\rho_{k}}\right)^{2}\psi_{k}=0.\label{3.17}
\end{eqnarray}
Dividing (\ref{3.17}) by $\psi_{k}$ and rearranging, we have
\begin{eqnarray}
\frac{\psi_{k,T}}{\psi_{k}}=\frac{\rho_{k}}{2a}\left(u_{k-1}-u_{k}+ac\left(1-\frac{1}{\rho_{k}}\right)-a^{2}\left(1-\frac{1}{\rho_{k}^{2}}\right)\right).\label{3.18}
\end{eqnarray}
Dividing the first equation of (\ref{3.16}) by $\psi_{k}$ yields
\begin{equation}
\left(\frac{\psi_{k,T}}{\psi_{k}}\right)_{T}+\left(\frac{\psi_{k,T}}{\psi_{k}}\right)^{2}+2u_{k}+c\frac{\psi_{k,T}}{\psi_{k}}=0,\label{3.19}
\end{equation}
and substituting (\ref{3.18}) into (\ref{3.19}), we have
\begin{equation}
\begin{aligned}
&\left[\frac{\rho_{k}}{2a}\left((u_{k-1}-u_{k})+ac\left(1-\frac{1}{\rho_{k}}\right)-a^{2}\left(1-\frac{1}{\rho_{k}^{2}}\right)\right)\right]_{T}\\
&\qquad +\left[\frac{\rho_{k}}{2a}\left((u_{k-1}-u_{k})+ac\left(1-\frac{1}{\rho_{k}}\right)-a^{2}\left(1-\frac{1}{\rho_{k}^{2}}\right)\right)\right]^{2}\\
&\qquad +2u_{k}+c\left[\frac{\rho_{k}}{2a}\left((u_{k-1}-u_{k})+ac\left(1-\frac{1}{\rho_{k}}\right)-a^{2}\left(1-\frac{1}{\rho_{k}^{2}}\right)\right)\right]=0.
\end{aligned}
\label{3.20}
\end{equation}
The integral form of the hodograph transformation is given by
\begin{eqnarray}
x=\int\frac{1}{\rho(X,T)}dX=x_{0}(T)+\int_{0}^{X}\frac{1}{\rho(\bar{X},T)}d\bar{X},\label{hod}
\end{eqnarray}
where $x_{0}(T)$ is an integration function determined at the left-hand edge $(x=x_{0}(T))$.
Discretizing (\ref{hod}) in the spatial direction gives the
discrete hodograph transformation (\ref{hoddis}) stated in
Theorem~\ref{thm:semi-discrete-2hs}, where
$\rho_{k}=\rho(X_{k},T)$ with $X_{k} = k a$.

Introducing a mesh interval
\begin{equation}
\delta_{k}:=x_{k}-x_{k-1},
\end{equation}
we obtain
\begin{equation}
\delta_{k}=\frac{a}{\rho_{k}},\label{3.24}
\end{equation}
from the discrete hodograph transformation (\ref{hoddis}). Substituting (\ref{3.24}) into (\ref{3.20}), we have
\begin{equation}
\begin{aligned}
&\left[\frac{1}{2\delta_{k}}((u_{k-1}-u_{k})+(a-\delta_{k})(c-a-\delta_{k}))\right]_{T}\\
&\qquad +\left[\frac{1}{2\delta_{k}}((u_{k-1}-u_{k})+(a-\delta_{k})(c-a-\delta_{k}))\right]^{2}\\
&\qquad +2u_{k}+c\left[\frac{1}{2\delta_{k}}((u_{k-1}-u_{k})+(a-\delta_{k})(c-a-\delta_{k}))\right]=0,
\end{aligned}
\end{equation}
i.e.,
\begin{equation}
\begin{aligned}
&(u_{k}-u_{k-1})_{T}=\frac{\delta_{k,T}}{\delta_{k}}((u_{k}-u_{k-1})+(\delta_{k}-a)(c-a-\delta_{k}))\\
&\qquad +\frac{1}{2\delta_{k}}((u_{k}-u_{k-1})+(\delta_{k}-a)(c-a-\delta_{k}))^{2}+4u_{k}\delta_{k}\\
&\qquad -c((u_{k}-u_{k-1})+(\delta_{k}-a)(c-a-\delta_{k}))+\delta_{k,T}(2\delta_{k}-c).
\end{aligned}
\label{3.26}
\end{equation}
From the dependent variable transformation (\ref{ddt}), we have
\begin{equation}
\left(\frac{1}{\rho_{k}}\right)_{T}=-\frac{1}{a}\left(\log{\frac{f_{k}}{f_{k-1}}}\right)_{TT}=-\frac{1}{a}(u_{k}-u_{k-1}).\label{3.27}
\end{equation}
Using (\ref{3.24}), (\ref{3.27}) can be rewritten as
\begin{equation}
\frac{d\delta_{k}}{dT}=u_{k-1}-u_{k}.\label{3.28}
\end{equation}
Substituting (\ref{3.28}) into (\ref{3.26}), we obtain 
\begin{equation}
\left\{
\begin{aligned}
&\frac{d}{dT}(u_{k}-u_{k-1})
=-\frac{u_{k}-u_{k-1}}{\delta_{k}}
 \left((u_{k}-u_{k-1})+(\delta_{k}-a)(c-a-\delta_{k})\right)\\
&\qquad +\frac{1}{2\delta_{k}}
 \left((u_{k}-u_{k-1})+(\delta_{k}-a)(c-a-\delta_{k})\right)^{2}
 +4u_{k}\delta_{k}\\
&\qquad -c\left((u_{k}-u_{k-1})+(\delta_{k}-a)(c-a-\delta_{k})\right)
 -(u_{k}-u_{k-1})(2\delta_{k}-c),\\
&\frac{d\delta_{k}}{dT}=-u_{k}+u_{k-1}.
\end{aligned}
\right.
\label{2HS_delta}
\end{equation}

Now, differentiating the integral form of the hodograph transformation (\ref{hod}) with respect to  $T$, we have
\begin{eqnarray}
\frac{\partial x}{\partial T}=\frac{\partial x_{0}(T)}{\partial T}+\frac{\partial}{\partial T}\int_{0}^{X}\frac{1}{\rho(\bar{X},T)}d\bar{X}=\frac{\partial x_{0}(T)}{\partial T}-\int_{0}^{X}\frac{\partial}{\partial \bar{X}}u(\bar{X},T)d\bar{X}
=\frac{\partial x_{0}(T)}{\partial T}-(u(X,T)-u(0,T)).
\label{3.37}
\end{eqnarray}
This calculation employs the conservation law (\ref{conlow}). 
The differential law (\ref{diflaw}) leads to $\displaystyle\frac{\partial x}{\partial T}=-u(X,T)$. 
Therefore, we obtain the evolution equation for $x_{0}$ from (\ref{3.37}), 
\begin{equation}
\frac{\partial x_{0}(T)}{\partial T}=-u(0,T). \label{condition}
\end{equation}
Following \cite{Hori2025}, we refer to (\ref{condition}) as the hodograph consistency condition.

Since $\delta_k=x_k-x_{k-1}$, the second equation of (\ref{2HS_delta}) gives 
\begin{eqnarray}
\frac{dx_{k}}{dT}-\frac{dx_{k-1}}{dT}=-u_{k}+u_{k-1}.
\label{delta_x}
\end{eqnarray}
Hence,
\begin{eqnarray}
x_{k,T}+u_{k}=C(T),
\label{x_evolution_C}
\end{eqnarray}
where $C(T)$ is independent of  $k$. Thus, the evolution equation for the mesh intervals determines the mesh velocities only up to a $k$-independent function. Evaluating (\ref{x_evolution_C}) at $k=0$ and using the consistency condition (\ref{condition}), we obtain $C(T)=0$. Therefore, (\ref{x_evolution_C}) reduces to
\begin{eqnarray}
\frac{dx_{k}}{dT}=-u_{k}.
\label{x_evolution}
\end{eqnarray}
Together with the first equation of (\ref{2HS_delta}), equation (\ref{x_evolution}) yields the semi-discrete 2-HS equation (\ref{semi-2HS}).
\end{proof}

Dividing the first equation of the semi-discrete analogue of the 2-HS equation (\ref{semi-2HS}) by $\delta_{k}$, we obtain
\begin{equation}
\begin{aligned}
&\frac{(u_{k}-u_{k-1})_{T}}{\delta_{k}}=-\frac{u_{k}-u_{k-1}}{\delta_{k}}\left(\frac{u_{k}-u_{k-1}}{\delta_{k}}+\left(1-\rho_{k}\right)(c-a-\delta_{k})\right)\\
&\qquad +\frac{1}{2}\left(\frac{(u_{k}-u_{k-1})^{2}}{\delta_{k}^{2}}+\frac{2(\delta_{k}-a)(c-a-\delta_{k})(u_{k}-u_{k-1})}{\delta_{k}^{2}}+\left(1-2\rho_{k}+\rho_{k}^{2}\right)(c-a-\delta_{k})^{2}\right)\\
&\qquad +4u_{k}-c\left(\frac{u_{k}-u_{k-1}}{\delta_{k}}+\left(1-\rho_{k}\right)(c-a-\delta_{k})\right)-\frac{u_{k}-u_{k-1}}{\delta_{k}}(2\delta_{k}-c).
\end{aligned}
\label{1.72}
\end{equation}
Since
\begin{eqnarray}
\frac{d}{dT}\left(\frac{u_{k}-u_{k-1}}{\delta_{k}}\right)=\frac{(u_{k}-u_{k-1})_{T}}{\delta_{k}}+\frac{(u_{k}-u_{k-1})^{2}}{\delta_{k}^{2}},
\end{eqnarray}
equation (\ref{1.72}) is rewritten as
\begin{equation}
\begin{aligned}
&\frac{d}{dT}\left(\frac{u_{k}-u_{k-1}}{\delta_{k}}\right)=-\frac{u_{k}-u_{k-1}}{\delta_{k}}\left(1-\rho_{k}\right)(c-a-\delta_{k})\\
&\qquad +\frac{1}{2}\left(\frac{(u_{k}-u_{k-1})^{2}}{\delta_{k}^{2}}+\frac{2(\delta_{k}-a)(c-a-\delta_{k})(u_{k}-u_{k-1})}{\delta_{k}^{2}}+\left(1-2\rho_{k}+\rho_{k}^{2}\right)(c-a-\delta_{k})^{2}\right)\\
&\qquad +4u_{k}-c\left(\frac{u_{k}-u_{k-1}}{\delta_{k}}+\left(1-\rho_{k}\right)(c-a-\delta_{k})\right)-\frac{u_{k}-u_{k-1}}{\delta_{k}}(2\delta_{k}-c).
\end{aligned}
\label{3.48}
\end{equation}
Taking the continuous limit $a\rightarrow 0\quad(\delta_{k}\rightarrow 0)$ in (\ref{3.48}), we have
\begin{eqnarray}
(\partial_{t}-u\partial_{x})u_{x}=-cu_{x}(1-\rho)+\frac{1}{2}\left(u_{x}^{2}+2u_{x}(1-\rho)c+c^{2}(1-2\rho+\rho^{2})\right)+4u-c(u_{x}+c(1-\rho))+cu_{x},\label{3.49}
\end{eqnarray}
because 
\begin{eqnarray}
\frac{d}{dT}=\frac{d}{dt}+\frac{dx_{k}}{dT}\frac{d}{dx_{k}}\rightarrow \partial_{t}-u\partial_{x}\quad \mathrm{for}\quad a\rightarrow 0.
\end{eqnarray}
Rearranging (\ref{3.49}), it follows that
\begin{equation}
(\partial_{t}-u\partial_{x})u_{x}=\frac{1}{2}u_{x}^{2}-\frac{c^{2}}{2}+\frac{\rho^{2} c^{2}}{2}+4u,
\end{equation}
i.e.,
\begin{equation}
u_{tx}-uu_{xx}-\frac{u_{x}^{2}}{2}-4u-\frac{c^{2}}{2}\rho^{2}+\frac{c^{2}}{2}=0,
\end{equation}
which is the first equation of (\ref{2HS}). The mesh equation (\ref{3.28}),
together with $\delta_k=a/\rho_k$ in (\ref{3.24}), has the continuous limit
\begin{equation}
\left(\frac{1}{\rho}\right)_{T}=-u_{X},
\qquad \rho_{T}=\rho^{2}u_{X}.
\end{equation}
Using $\partial_T=\partial_t-u\partial_x$ and
$\partial_X=\rho^{-1}\partial_x$, this relation becomes
$\rho_t=(\rho u)_x$, the second equation of (\ref{2HS}). Thus the continuous
limit of the semi-discrete system reproduces the full 2-HS system.

\begin{example}
The reduced Casoratian $\tau$-functions yield the explicit one- and two-soliton solutions of 
the semi-discrete 2-HS equation through the dependent variable transformation 
\begin{equation}
u_{k}=(\log{f_{k}})_{TT},\quad \frac{1}{\rho_{k}}=1-\frac{1}{a}\left(\log \frac{f_{k}}{f_{k-1}}\right)_{T},\label{disdep}
\end{equation}
and the discrete hodograph transformation 
\begin{equation}
x_{k}=x_{0}(T)+\sum_{l=1}^{k}\frac{a}{\rho_{l}}.\label{dishodograph}
\end{equation}

\vspace{1cm}
\noindent
{\bf One-soliton:}
The $\tau$-functions for the one-soliton solution are given below.
\begin{eqnarray}
f_{k}=(1-ap_{1})^{-k}e^{\xi_{1}}+(1-aq_{1})^{-k}e^{\eta_{1}}\propto1+\left(\frac{1-ap_{1}}{1-aq_{1}}\right)^{-k}e^{\xi_{1}-\eta_{1}},
\end{eqnarray}
\begin{eqnarray}
g_{k}=p_{1}(1-ap_{1})^{-k}e^{\xi_{1}}+q_{1}(1-aq_{1})^{-k}e^{\eta_{1}}\propto1
+\frac{p_{1}}{q_{1}}\left(\frac{1-ap_{1}}{1-aq_{1}}\right)^{-k}e^{\xi_{1}-\eta_{1}},
\end{eqnarray}
where 
\begin{equation}
\xi_{1}=\frac{1}{p_{1}}T+\xi_{1,0},\quad 
\eta_{1}=\frac{1}{q_{1}}T+\eta_{1,0}.
\end{equation}
The parameters $p_{1}$ and $q_{1}$ satisfy $\frac{1}{p_{1}}+\frac{1}{q_{1}}=c$.

Figure~\ref{fig3.1} shows the one-soliton solution for $t=0$.
The soliton moves to the right along the $x$-axis as time progresses.


\begin{figure}[h]
\centering
         \includegraphics[keepaspectratio, scale=0.5]{
         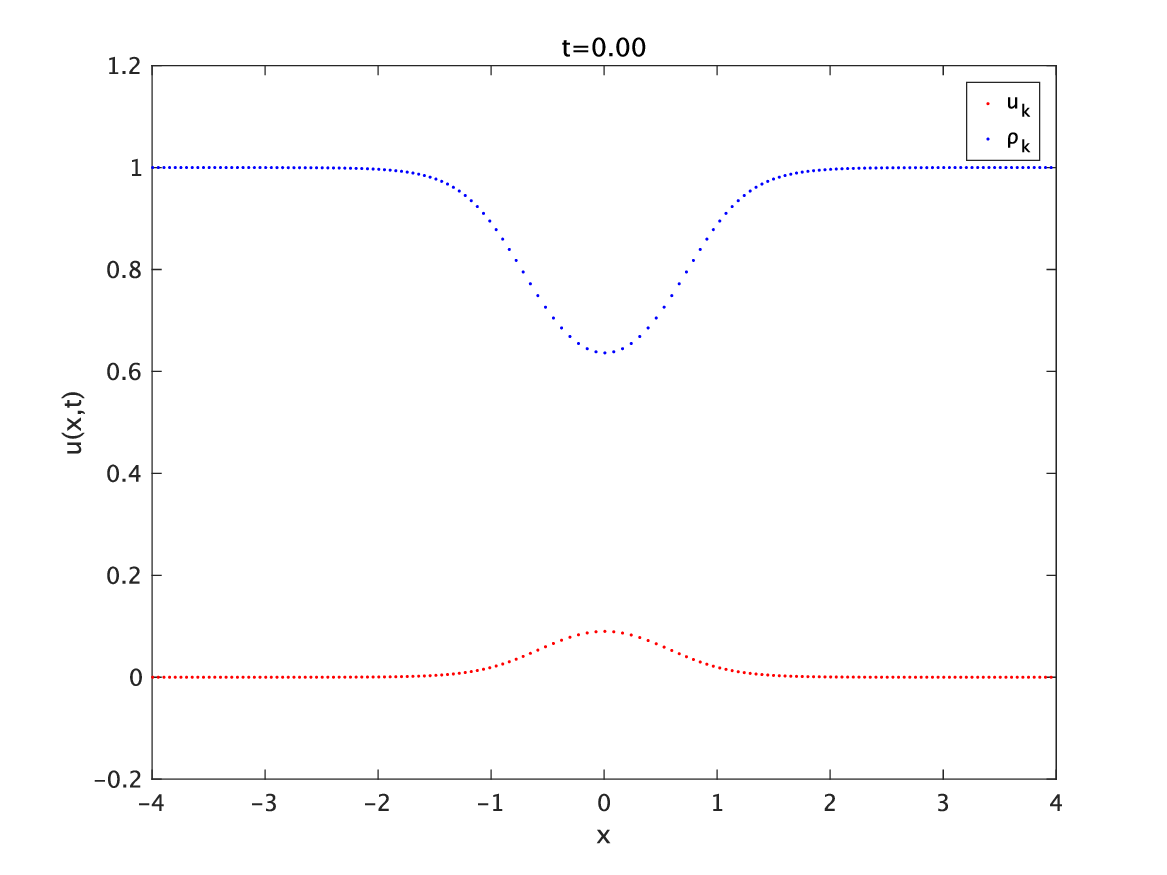}
         \caption{The one-soliton solution of the semi-discrete 2-HS equation (red dots: $u_{k}$, blue dots: $\rho_{k}$). 
         The parameters are $p_{1}=5, c=1, \eta_{1,0}=\xi_{1,0}=0, a=0.005$.}
         \label{fig3.1}

\end{figure}

\vspace{1cm}
\noindent	
{\bf Two-soliton:}
The $\tau$-functions for the two-soliton solution are given below.
As in the continuous case, we write them in the following proportional forms:
\begin{equation*}
\begin{aligned}
&f_{k}=(p_{2}-p_{1})(1-ap_{1})^{-k}(1-ap_{2})^{-k}e^{\xi_{1}+\xi_{2}}+(q_{2}-p_{1})(1-ap_{1})^{-k}(1-aq_{2})^{-k}e^{\xi_{1}+\eta_{2}}\\
&\qquad +(p_{2}-q_{1})(1-ap_{2})^{-k}(1-aq_{1})^{-k}e^{\xi_{2}+\eta_{1}}+(q_{2}-q_{1})(1-aq_{1})^{-k}(1-aq_{2})^{-k}e^{\eta_{1}+\eta_{2}}\\
&\qquad \propto 1+\left(\frac{1-ap_{1}}{1-aq_{1}}\right)^{-k}e^{\xi_{1}-\eta_{1}}+\left(\frac{1-ap_{2}}{1-aq_{2}}\right)^{-k}e^{\xi_{2}-\eta_{2}}\\
&\qquad +\frac{(p_{2}-p_{1})(q_{2}-q_{1})}{(q_{2}-p_{1})(p_{2}-q_{1})}\left(\frac{1-ap_{1}}{1-aq_{1}}\right)^{-k}
\left(\frac{1-ap_{2}}{1-aq_{2}}\right)^{-k}e^{\xi_{1}+\xi_{2}-\eta_{1}-\eta_{2}},
\end{aligned}
\end{equation*}
\begin{equation*}
\begin{aligned}
&g_{k}=p_{1}p_{2}(p_{2}-p_{1})(1-ap_{1})^{-k}(1-ap_{2})^{-k}e^{\xi_{1}+\xi_{2}}+p_{1}q_{2}(q_{2}-p_{1})(1-ap_{1})^{-k}(1-aq_{2})^{-k}e^{\xi_{1}+\eta_{2}}\\
&\qquad +p_{2}q_{1}(p_{2}-q_{1})(1-ap_{2})^{-k}(1-aq_{1})^{-k}e^{\xi_{2}+\eta_{1}}+q_{1}q_{2}(q_{2}-q_{1})(1-aq_{1})^{-k}(1-aq_{2})^{-k}e^{\eta_{1}+\eta_{2}}\\
&\qquad \propto 1+\frac{p_{1}}{q_{1}}\left(\frac{1-ap_{1}}{1-aq_{1}}\right)^{-k}e^{\xi_{1}-\eta_{1}}+\frac{p_{2}}{q_{2}}
\left(\frac{1-ap_{2}}{1-aq_{2}}\right)^{-k}e^{\xi_{2}-\eta_{2}}\\
&\qquad +\frac{p_{1}p_{2}(p_{2}-p_{1})(q_{2}-q_{1})}{q_{1}q_{2}(q_{2}-p_{1})(p_{2}-q_{1})}
\left(\frac{1-ap_{1}}{1-aq_{1}}\right)^{-k}\left(\frac{1-ap_{2}}{1-aq_{2}}\right)^{-k}e^{\xi_{1}+\xi_{2}-\eta_{1}-\eta_{2}},
\end{aligned}
\end{equation*}
where
\begin{equation}
\xi_{i}=\frac{1}{p_{i}}T+\xi_{i,0},\quad
\eta_{i}=\frac{1}{q_{i}}T+\eta_{i,0} \quad \mathrm{for} \quad i=1,2.
\end{equation}
The parameters $p_{i}$ and $q_{i}$ satisfy $\frac{1}{p_{i}}+\frac{1}{q_{i}}=c$ for $i=1,2$.

Figure~\ref{fig3.2} shows the two-soliton interaction of $u_{k}$
and $\rho_{k}$. Both solitons move to the right along the $x$-axis as time
progresses.


\begin{figure}[h]
  \begin{tabular}{cc}
      \begin{minipage}[t]{0.45\hsize}
              \includegraphics[width=\linewidth]{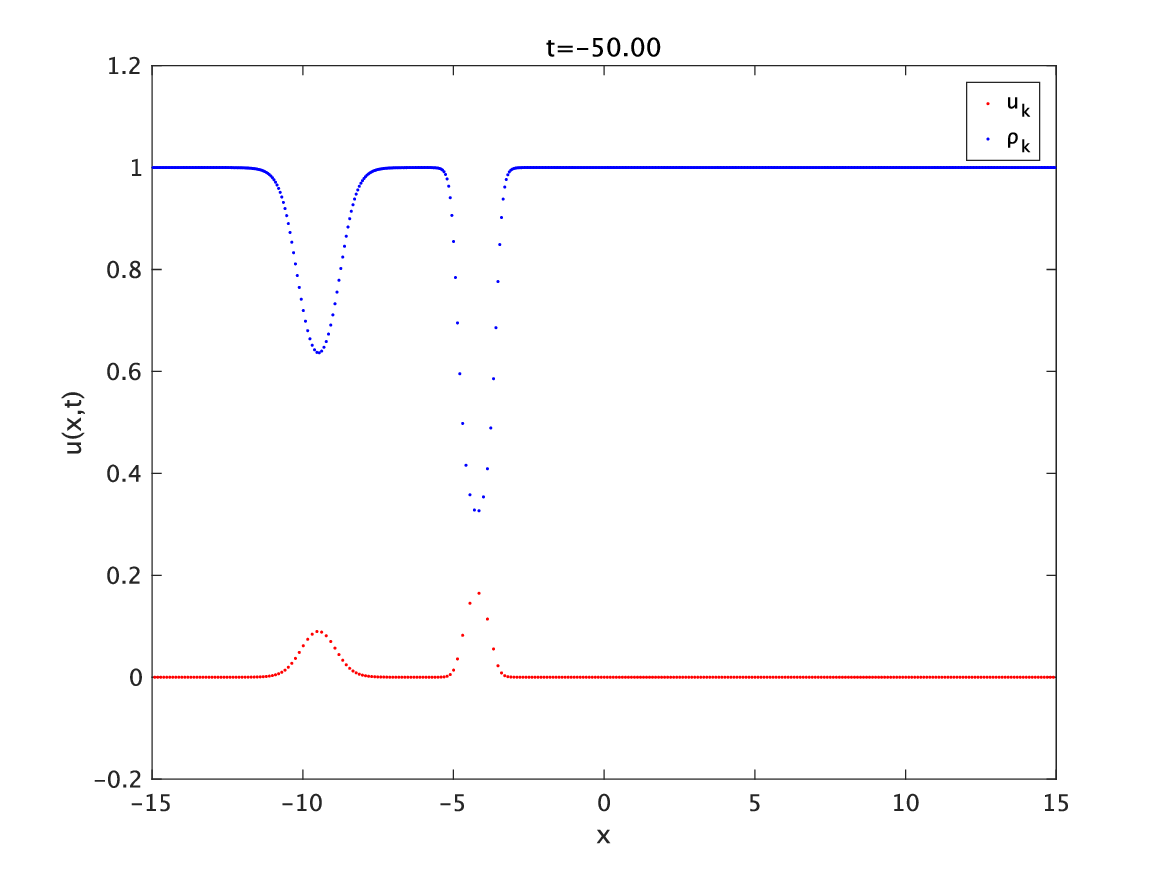}
        \captionsetup{labelformat=empty,labelsep=none}
      \end{minipage}
                \begin{minipage}[t]{0.45\hsize}
        \includegraphics[width=\linewidth]{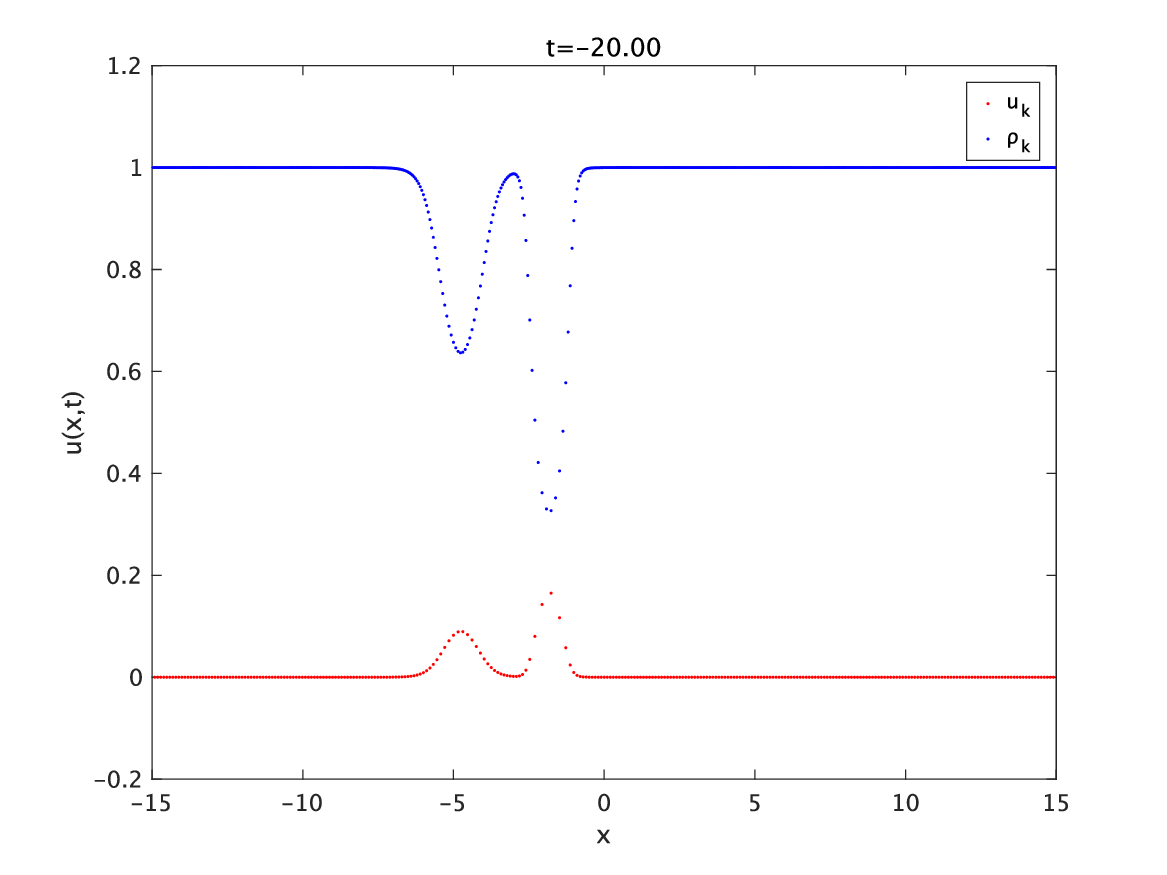}
        \captionsetup{labelformat=empty,labelsep=none}
      \end{minipage}
                     \end{tabular}
\\

 \begin{tabular}{cc}
      \begin{minipage}[t]{0.45\hsize}
        \includegraphics[width=\linewidth]{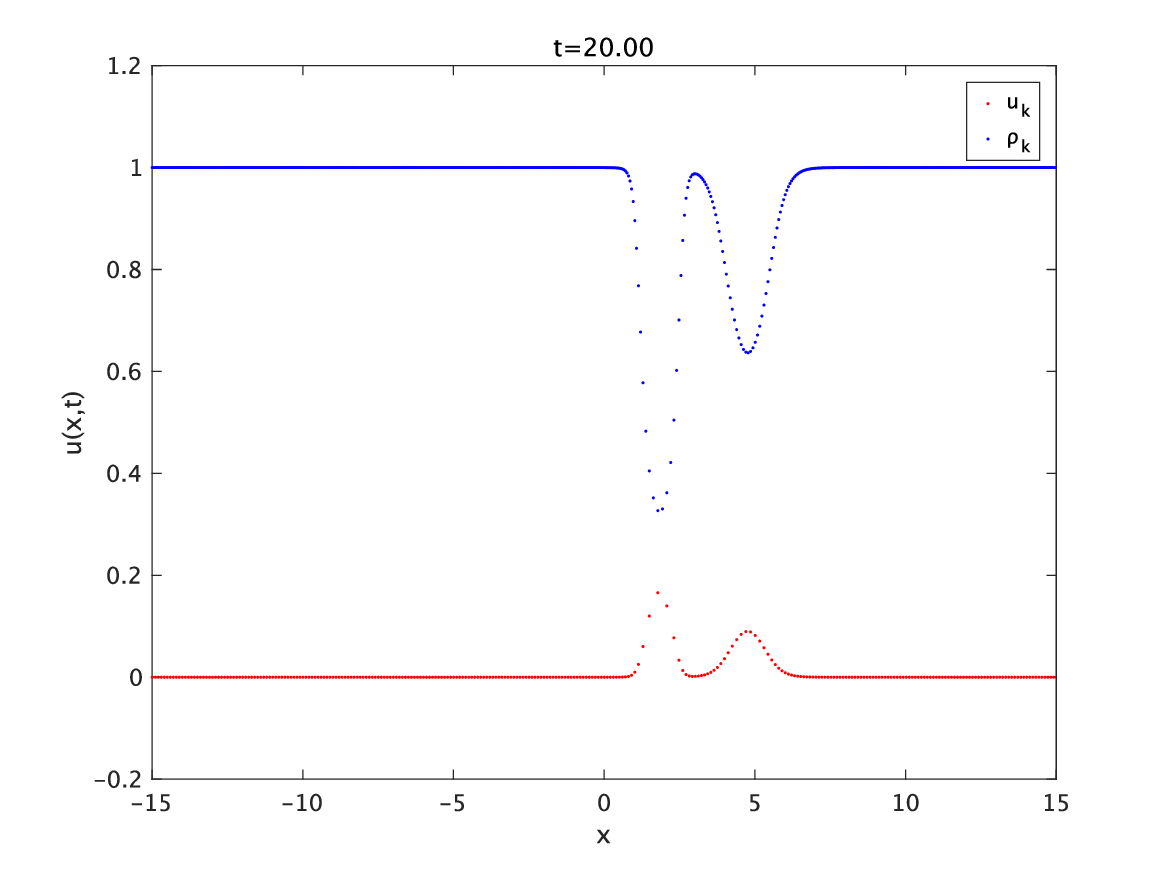}
        \captionsetup{labelformat=empty,labelsep=none}
      \end{minipage}

      \begin{minipage}[t]{0.45\hsize}
        \includegraphics[width=\linewidth]{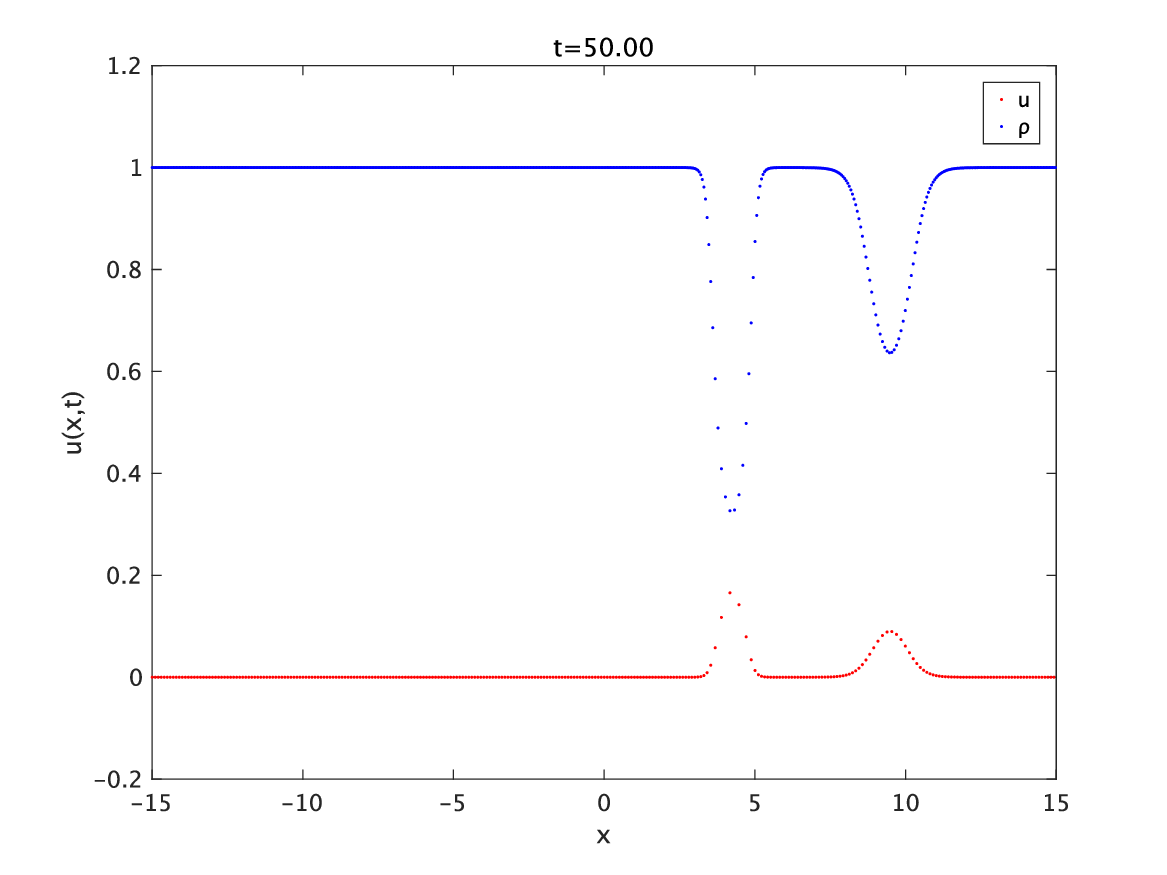}
        \captionsetup{labelformat=empty,labelsep=none}
      \end{minipage} 
         \end{tabular}
         \caption{The two-soliton interaction of the semi-discrete 2-HS equation (red dots: $u_{k}$, blue dots: $\rho_{k}$). 
         The parameters are 
         $p_{1}=1.1, p_{2}=1.25, c=1, a=0.005, 
         \xi_{1,0}=\frac{1}{2}\log{((q_{1} - p_{2}) (q_{1}- q_{2}))}, 
         \eta_{1,0}= \frac{1}{2} \log{((p_{2} - p_{1}) (q_{2} - p_{1}))} + \pi \mathrm{i}, \xi_{2,0}=\frac{1}{2}\log{((q_{2} - p_{1}) (q_{1} - q_{2}))}, 
         \eta_{2,0}=\frac{1}{2}\log{((p_{2} - p_{1}) (q_{1} - p_{2}))}$.}
         \label{fig3.2}
         
  \end{figure}

\end{example}

\section{Lax representation and continuous limit}
\label{sec:lax}

In this section, we first present a $2\times2$ matrix Lax pair for the semi-discrete 2-HS equation and derive its scalar reduction. We then take the continuous limit of the matrix Lax pair and derive the corresponding continuous scalar Lax pair. Finally, we show that the resulting scalar Lax pair agrees with the known representation of Aratyn et al. \cite{AGZ}.

\subsection{Matrix Lax pair}

\begin{proposition}[Matrix Lax pair]
Let
\begin{equation}
\alpha:=a(a-c),\qquad
\Delta(\lambda):=1-\alpha\lambda.
\end{equation}
Assume that $\lambda\neq 0$, $\Delta(\lambda)\neq 0$, and $\delta_{k}\neq 0$ for all $k$.
Then the semi-discrete 2-HS equation (\ref{semi-2HS}) admits the Lax pair
\begin{equation}
\Psi_{k}=U_{k}(\lambda)\Psi_{k-1},\qquad
\frac{\partial \Psi_{k}}{\partial T}=V_{k}(\lambda)\Psi_{k},
\label{sd-lax}
\end{equation}
where $\lambda$ is a $T$-independent spectral parameter, 
$\Psi_{k}=\Psi_{k}(T,\lambda)$ is a two-component vector, and
\begin{equation}
V_{k}(\lambda)=
\left(
\begin{array}{cc}
0 & 1 \\
\displaystyle -2u_{k}+\frac{c^{2}}{4}+\frac{1}{2\lambda} & 0
\end{array}
\right),
\label{sd-lax-V}
\end{equation}
\begin{equation}
U_{k}(\lambda)=\frac{1}{\Delta(\lambda)}
\left(
\begin{array}{cc}
\displaystyle 1+\lambda\left(\delta_{k}^{2}+u_{k}-u_{k-1}-\alpha\right) & \displaystyle 2\lambda\delta_{k} \\
\displaystyle \delta_{k}+\frac{\lambda}{2\delta_{k}}\left((\alpha-\delta_{k}^{2})^{2}-(u_{k}-u_{k-1})^{2}\right) & \displaystyle 1+\lambda\left(\delta_{k}^{2}-u_{k}+u_{k-1}-\alpha\right)
\end{array}
\right).
\label{sd-lax-U}
\end{equation}
The compatibility condition
\begin{equation}
\frac{\partial U_{k}}{\partial T}+U_{k}V_{k-1}-V_{k}U_{k}=0
\label{sd-lax-comp}
\end{equation}
is equivalent to the first equation of (\ref{semi-2HS}) together with
$\delta_{k,T}=u_{k-1}-u_{k}$. Together with
$\delta_k=x_k-x_{k-1}$ and the hodograph consistency condition
$x_{0,T}=-u_0$, these equations are equivalent to the full semi-discrete
2-HS system (\ref{semi-2HS}).
\end{proposition}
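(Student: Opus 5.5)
The plan is a direct computation of the zero-curvature condition (\ref{sd-lax-comp}) entry by entry. I will use $\delta_k$ and $w_k:=u_k-u_{k-1}$ as the basic dynamical quantities. I will also treat $u_k$ itself as it appears in $V_k$, using $u_{k-1}=u_k-w_k$ in $V_{k-1}$.

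Write $\Delta U_k=I+\lambda A_k+\mu_k E_{21}$, where:
\begin{itemize}
\item $A_k$ is the $\lambda$-linear part, with $(A_k)_{11}=\delta_k^2+w_k-\alpha$, $(A_k)_{12}=2\delta_k$, $(A_k)_{22}=\delta_k^2-w_k-\alpha$, and $(A_k)_{21}=\frac{1}{2\delta_k}\big((\alpha-\delta_k^2)^2-w_k^2\big)$;
\item $\mu_k=\delta_k$ is the $\lambda^0$ part of the $(2,1)$ entry.
\end{itemize}
Similarly, $V_k=E_{12}+\big(-2u_k+\tfrac{c^2}{4}+\tfrac{1}{2\lambda}\big)E_{21}$.

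Since $\Delta$ is $T$-independent, I will multiply (\ref{sd-lax-comp}) by $\Delta(\lambda)$. This gives a Laurent polynomial in $\lambda$ with powers $\lambda^{-1},\lambda^0,\lambda^1$, one set for each of the four matrix entries. I will then collect coefficients.

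The $\lambda^{-1}$ terms come only from the $\frac{1}{2\lambda}E_{21}$ parts. They are $\frac{1}{2\lambda}(I E_{21}-E_{21}I)=0$ at order $\lambda^{-1}$, and the remaining pieces shift to order $\lambda^0$. At order $\lambda^0$, the terms $I\cdot V_{k-1}-V_k\cdot I$ give $-2(u_{k-1}-u_k)E_{21}=2w_kE_{21}$. The $\frac{1}{2\lambda}$ parts multiplying $\lambda A_k$ give $\tfrac12(A_kE_{21}-E_{21}A_k)$. The $\mu_k$ terms give $\mu_k(E_{21}E_{12}-E_{12}E_{21})$ and $\partial_T\mu_k E_{21}$.

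Checking the entries at this order:
\begin{itemize}
\item The $(1,1)$ entry will give $\tfrac12\cdot 2\delta_k-\delta_k=0$ identically.
\item The $(2,2)$ entry gives a similar identity.
\item The $(2,1)$ entry gives $\delta_{k,T}+2w_k+\tfrac12\big((A_k)_{22}-(A_k)_{11}\big)=\delta_{k,T}+w_k$.
\end{itemize}
So the $(2,1)$ entry reproduces exactly $\delta_{k,T}=u_{k-1}-u_k$. This is the first key step: identify the mesh equation at order $\lambda^0$.

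At order $\lambda^1$ the equation is $\partial_TA_k+A_kV^{(0)}_{k-1}-V^{(0)}_kA_k=0$, where $V^{(0)}_k=E_{12}+(-2u_k+c^2/4)E_{21}$. Going through the entries:
\begin{itemize}
\item The $(1,2)$ entry gives $2\delta_{k,T}+(A_k)_{11}-(A_k)_{22}=2\delta_{k,T}+2w_k$. This is consistent with the mesh equation and adds nothing new.
\item The $(1,1)$ and $(2,2)$ entries give $\partial_T(\delta_k^2\pm w_k)$ plus terms of the form $2\delta_k(-2u_{k-1}+c^2/4)-(A_k)_{21}$ and its counterpart. Their sum is $2\partial_T(\delta_k^2)+2\delta_kw_k\cdot(\ldots)$, which should vanish by the mesh equation after the cancellation $-2u_{k-1}+2u_k=2w_k$.
\item Their difference yields an expression for $w_{k,T}$ in terms of $(A_k)_{21}$, $\delta_k$, $u_k$, and $c$.
\end{itemize}
I will substitute $\alpha=a(a-c)$ and note that $\alpha-\delta_k^2=-(\delta_k-a)(c-a-\delta_k)-c\delta_k+\ldots$. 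More precisely, $(\delta_k-a)(c-a-\delta_k)=-\delta_k^2+c\delta_k+\alpha$. So $w_k+(\delta_k-a)(c-a-\delta_k)=w_k-(\delta_k^2-\alpha)+c\delta_k$. With this identity I will compare the $w_{k,T}$ relation term by term with the first equation of (\ref{semi-2HS}), after using $\delta_{k,T}=-w_k$.

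The $(2,1)$ entry at order $\lambda^1$ will involve $\partial_T(A_k)_{21}$. After inserting $\delta_{k,T}=-w_k$ and the $w_{k,T}$ equation, it should vanish identically. This is the main obstacle: an algebraic verification that this entry is a consequence of the other two equations and not a new constraint. I will check it by expanding in $\delta_k$, $w_k$, and $u_k$ and verifying that the coefficients cancel, possibly by computer algebra.

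For the converse, the computation shows that (\ref{sd-lax-comp}) holds if and only if the $(2,1)$ entry at $\lambda^0$ (the mesh equation) and the $(1,1)$ entry at $\lambda^1$ (the $w_{k,T}$ equation) hold, the rest being identities. Finally, exactly as in the proof of Theorem~\ref{thm:semi-discrete-2hs}, the relation $\delta_{k,T}=u_{k-1}-u_k$ together with $\delta_k=x_k-x_{k-1}$ gives $x_{k,T}+u_k=C(T)$. The condition $x_{0,T}=-u_0$ forces $C=0$, which yields the full system (\ref{semi-2HS}).
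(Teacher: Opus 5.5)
Your proposal is correct and follows essentially the same route as the paper. Both clear $\Delta(\lambda)$ and evaluate the zero-curvature condition entry by entry: you extract $\delta_{k,T}=u_{k-1}-u_k$ and then the $w_{k,T}$ equation from the $(1,1)$ entry (your $\lambda^1$ difference of the diagonal entries reproduces the first equation of (\ref{semi-2HS}) exactly), the remaining entries, including the $(2,1)$ entry at order $\lambda$, vanish identically, and the same $C(T)=0$ argument via the hodograph consistency condition finishes the proof. The differences are only bookkeeping: the paper reads the mesh equation directly off the $(1,2)$ entry, which is $2\lambda(\delta_{k,T}+u_k-u_{k-1})$, whereas you first find it in the $(2,1)$ entry at order $\lambda^0$, and your rough identity for $\alpha-\delta_k^2$ has a sign slip that your ``more precisely'' version already corrects.
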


\begin{proof}
Substituting (\ref{sd-lax-V}) and (\ref{sd-lax-U}) into (\ref{sd-lax-comp}) and clearing the common denominator $\Delta(\lambda)$, the $(1,2)$-entry gives 
\begin{equation} \delta_{k,T}=u_{k-1}-u_k. 
\label{sd-lax-delta} 
\end{equation} 
Using this relation, the $(1,1)$-entry reduces exactly to the first equation of (\ref{semi-2HS}), while the $(2,1)$- and $(2,2)$-entries give no additional conditions. Thus the zero-curvature condition is equivalent to the first equation of (\ref{semi-2HS}) together with (\ref{sd-lax-delta}).  As shown in (\ref{delta_x})--(\ref{x_evolution}), equation (\ref{sd-lax-delta}), together with $\delta_k=x_{k}-x_{k-1}$ and the hodograph consistency condition $x_{0,T}=-u_{0}$, yields $x_{k,T}=-u_k$. Hence, under the hodograph consistency condition, (\ref{sd-lax-comp}) is equivalent to the semi-discrete 2-HS system (\ref{semi-2HS}).
\end{proof}

Since $\Delta(\lambda)$ is independent of $k$ and $T$, it can be absorbed by a scalar gauge transformation and therefore does not affect the zero-curvature condition.

\subsection{Scalar Lax pair}

Eliminating the second component of the matrix eigenfunction gives a scalar
Lax pair. Write
\begin{equation}
\Psi_{k}=\begin{pmatrix}\phi_{k}\\ \chi_{k}\end{pmatrix},
\end{equation}
and introduce
\begin{equation}
\begin{aligned}
\mathcal A_{k}&=1+\lambda\left(\delta_{k}^{2}+u_{k}-u_{k-1}-\alpha\right),
&\mathcal B_{k}&=2\lambda\delta_{k},\\
\mathcal C_{k}&=\delta_{k}+\frac{\lambda}{2\delta_{k}}
\left\{\left(\alpha-\delta_{k}^{2}\right)^{2}-\left(u_{k}-u_{k-1}\right)^{2}\right\},
&\mathcal E_{k}&=1+\lambda\left(\delta_{k}^{2}-u_{k}+u_{k-1}-\alpha\right).
\end{aligned}
\label{sd-scalar-entries}
\end{equation}
Thus
$U_{k}=\Delta(\lambda)^{-1}
\left(\begin{smallmatrix}\mathcal A_{k}&\mathcal B_{k}\\
\mathcal C_{k}&\mathcal E_{k}\end{smallmatrix}\right)$. 

The first row of the time equation gives $\chi_{k}=\phi_{k,T}$. On the
other hand, the first row of the spatial equation at the lattice point
$k+1$ gives
\begin{equation}
\chi_{k}=\frac{\Delta(\lambda)\phi_{k+1}-\mathcal A_{k+1}\phi_{k}}
{\mathcal B_{k+1}}.
\label{chi_k}
\end{equation}
Equating these two expressions for $\chi_k$ gives the scalar time evolution
\begin{equation}
\phi_{k,T}=\frac{\Delta(\lambda)\phi_{k+1}-\mathcal A_{k+1}\phi_{k}}
{\mathcal B_{k+1}}.
\label{sd-scalar-time}
\end{equation}
To obtain the scalar spatial problem, substitute (\ref{chi_k}) and its $k\rightarrow k-1$ shift into the second row of the spatial equation $\Delta(\lambda)\chi_{k}=\mathcal C_{k}\phi_{k-1}
+\mathcal E_{k}\chi_{k-1}$. Using
\begin{equation}
\mathcal A_{k}\mathcal E_{k}-\mathcal B_{k}\mathcal C_{k}
=P(\lambda),\qquad P(\lambda):=1-2\alpha\lambda,
\label{sd-scalar-det}
\end{equation}
we obtain the scalar spatial equation.
\begin{equation}
\begin{aligned}
\frac{\Delta(\lambda)^{2}}{\mathcal B_{k+1}}\phi_{k+1}
&-\Delta(\lambda)\left(
\frac{\mathcal A_{k+1}}{\mathcal B_{k+1}}
+\frac{\mathcal E_{k}}{\mathcal B_{k}}
\right)\phi_{k}
+\frac{P(\lambda)}{\mathcal B_{k}}\phi_{k-1}=0.
\end{aligned}
\label{sd-scalar-space}
\end{equation}
Under the assumptions of Proposition 1, (\ref{sd-scalar-time}) and (\ref{sd-scalar-space}) are equivalent to the scalar reduction of the matrix Lax pair. Their compatibility follows from the matrix zero-curvature condition.

\subsection{Continuous limit and comparison with the Aratyn-Gomes-Zimerman scalar Lax pair}

We now take the continuous limit $a\to0$ of the matrix Lax pair.
Let $X_{k}=ka$ and identify $\Psi_{k}(T,\lambda)$ with a smooth
vector $\Psi(X,T,\lambda)$. Using
\begin{equation}
\delta_{k}=\frac{a}{\rho_{k}},\qquad
u_{k}-u_{k-1}=a u_{X}+O(a^{2}),\qquad
\Delta(\lambda)=1+ac\lambda+O(a^{2}),
\label{sd-lax-cont-expansions}
\end{equation}
we obtain
\begin{equation}
U_{k}(\lambda)=I+a{\tilde{L}}(\lambda)+O(a^{2}),
\qquad
{\tilde{L}}(\lambda)=
\begin{pmatrix}
\lambda u_{X} & \displaystyle\frac{2\lambda}{\rho} \\
\displaystyle\frac{1}{\rho}+\frac{\lambda\rho}{2}
\left(c^{2}-u_{X}^{2}\right) & -\lambda u_{X}
\end{pmatrix}.
\label{sd-lax-cont-X}
\end{equation}
Hence the continuous limit of (\ref{sd-lax}) is
\begin{equation}
\Psi_{X}={\tilde{L}}(\lambda)\Psi,\qquad
\Psi_{T}=V(\lambda)\Psi,\qquad
V(\lambda)=
\begin{pmatrix}
0 & 1 \\
\displaystyle -2u+\frac{c^{2}}{4}+\frac{1}{2\lambda} & 0
\end{pmatrix}.
\label{sd-lax-cont-hodo}
\end{equation}
Using the differential law (\ref{diflaw}), writing
$\hat{\Psi}(x,t,\lambda)=\Psi(X,T,\lambda)$ gives the Lax pair in the physical variables,
\begin{equation}
\hat{\Psi}_{x}=L(\lambda)\hat{\Psi},\qquad
(\partial_{t}-u\partial_{x})\hat{\Psi}=V(\lambda)\hat{\Psi},
\label{sd-lax-cont-physical}
\end{equation}
where
\begin{equation}
L(\lambda)=
\begin{pmatrix}
\lambda u_{x} & 2\lambda \\
\displaystyle 1+\frac{\lambda}{2}
\left(c^{2}\rho^{2}-u_{x}^{2}\right) & -\lambda u_{x}
\end{pmatrix}.
\label{sd-lax-cont-L}
\end{equation}
To obtain the scalar Lax pair corresponding to the continuous limit, write the two-component eigenfunction as
$\hat{\Psi}=(\phi,\chi)^{\mathsf T}$. The spatial equations are
\begin{equation}
\begin{aligned}
\phi_{x}&=\lambda u_{x}\phi+2\lambda\chi,\\
\chi_{x}&=\left\{1+\frac{\lambda}{2}
\left(c^{2}\rho^{2}-u_{x}^{2}\right)\right\}\phi
-\lambda u_{x}\chi.
\end{aligned}
\label{sd-lax-cont-components}
\end{equation}
The first equation gives
\begin{equation}
\chi=\frac{1}{2\lambda}\phi_{x}-\frac{1}{2}u_{x}\phi,
\qquad
\chi_{x}=\frac{1}{2\lambda}\phi_{xx}
-\frac{1}{2}u_{xx}\phi-\frac{1}{2}u_{x}\phi_{x}.
\label{sd-lax-cont-eliminate}
\end{equation}
On the other hand, substituting the first relation in
(\ref{sd-lax-cont-eliminate}) into the second equation of
(\ref{sd-lax-cont-components}) yields
\begin{equation}
\chi_{x}=\left(1+\frac{\lambda c^{2}\rho^{2}}{2}\right)\phi
-\frac{1}{2}u_{x}\phi_{x}.
\end{equation}
Comparing these two expressions for $\chi_{x}$ gives the scalar spatial
spectral problem. Moreover, the first row of the time equation in
(\ref{sd-lax-cont-physical}) is $\phi_{t}-u\phi_{x}=\chi$.
Substitution of (\ref{sd-lax-cont-eliminate}) gives the scalar time evolution.
Thus, we obtain the scalar Lax pair
\begin{equation}
\phi_{xx}=\left\{\lambda(u_{xx}+2)+c^{2}\lambda^{2}\rho^{2}\right\}\phi,
\qquad
\phi_{t}=\left(u+\frac{1}{2\lambda}\right)\phi_{x}
-\frac{1}{2}u_{x}\phi.
\label{sd-lax-cont-scalar}
\end{equation}

A scalar Lax representation for the continuous 2-HS equation was given by Aratyn, Gomes, and Zimerman (AGZ) \cite{AGZ} in their unified deformation framework for the two-component CH and Dym/HS-type systems. To avoid confusion with the coefficients $\mu$
and $\kappa$ in (\ref{Feng}) and (\ref{2CH}), we attach the subscript
$\mathrm A$ to the parameters and fields in the AGZ formulation. Their Lax pair reads
\begin{equation}
\Phi_{xx}=\left(\frac{\mu_{\mathrm A}^{2}}{4}-M_{\mathrm A}\Lambda_{\mathrm A}
+R_{\mathrm A}^{2}\Lambda_{\mathrm A}^{2}\right)\Phi,
\qquad
\Phi_{t}=-\left(\frac{1}{2\Lambda_{\mathrm A}}+U_{\mathrm A}\right)\Phi_{x}
+\frac{1}{2}(U_{\mathrm A})_{x}\Phi,
\label{agz-general-scalar-pair}
\end{equation}
where
\begin{equation}
M_{\mathrm A}=\mu_{\mathrm A}^{2}U_{\mathrm A}-(U_{\mathrm A})_{xx}
+\frac{\kappa_{\mathrm A}}{2}.
\end{equation}
Here, $\kappa_{\mathrm A}/2$ is an integration constant. In the Dym/HS reduction
$\mu_{\mathrm A}=0$, this term remains in the spatial spectral problem and accounts for
the nonzero linear term in the present 2-HS equation. Indeed, setting $\mu_{\mathrm A}=0$ and making the identification
\begin{equation}
U_{\mathrm A}=-u,\qquad \Lambda_{\mathrm A}=-\lambda,\qquad
R_{\mathrm A}=c\rho,\qquad \kappa_{\mathrm A}=4,
\end{equation}
the scalar Lax pair of Aratyn et al. reduces exactly to (\ref{sd-lax-cont-scalar}). Hence the continuous limit of the present semi-discrete Lax pair reproduces the known scalar Lax representation of the 2-HS equation.

\section{Conclusion}
\label{4}
In this paper, we have constructed an integrable semi-discrete analogue of the 2-HS equation, the short-wave limit of the 2-CH equation. In Section \ref{2}, we introduced a new set of bilinear equations, distinct from the conventional CH-type bilinear form of 
Lou, Feng, and Yao \cite{2HS-10}, and showed that, through a pseudo 2-reduction and a hodograph transformation, these bilinear equations lead to the 2-HS equation. We also constructed the N-soliton solutions in Wronskian form, and examples of one- and two-soliton solutions were presented. 
In Section \ref{3}, we discretized this new bilinear formulation 
in the spatial direction, constructed the N-soliton solutions in Casoratian form, 
and presented the semi-discrete one- and two-soliton solutions. By applying 
the discrete hodograph transformation and the dependent variable transformation, we derived 
an integrable semi-discrete analogue of the 2-HS equation. In Section \ref{sec:lax},
we constructed a matrix Lax pair for the semi-discrete system, derived the corresponding scalar Lax pair, and showed that the continuous limit of the matrix Lax pair yields, after scalar reduction, the known scalar Lax representation of the continuous 2-HS equation.
To the best of our knowledge, this is the first integrable semi-discretization 
that preserves the two-component structure of the 2-HS equation. 
In the Sato-theoretic sense, its integrability is inherited from the underlying hierarchy 
through the bilinear tau-function construction, the pseudo 2-reduction, 
and the discrete hodograph transformation. 
The Lax pair complements this construction by providing an additional direct confirmation 
of the integrable structure in the physical variables.

Future work includes constructing a fully discrete integrable 
analogue of the 2-HS equation, as well as integrable spatial discretizations 
of related systems, such as the 2-CH equation.

\appendix
\section{Cuspon solutions of the Hunter--Saxton equation}
\label{app:cuspon}
\setcounter{equation}{0}
\setcounter{figure}{0}
\renewcommand{\theequation}{A.\arabic{equation}}
\renewcommand{\thefigure}{A.\arabic{figure}}

Taking the limit $c\rightarrow0$ in the bilinear equations (\ref{2.17}), 
with the parameter constraint (\ref{2ps}) reducing to the 2-reduction $q_{i}=-p_{i}$, we obtain the bilinear equations of the HS equation
\begin{eqnarray}
\left\{
\begin{array}{ll}
D_{T}^{2}g\cdot f=0,\\
 (D_{X}D_{T}^{2}-4D_{T})g\cdot f=0.
 \end{array}
\right.
\label{A1}
\end{eqnarray}
The bilinear equations (\ref{A1}) admit the N-soliton solution, expressed as 
\begin{equation}
f=\tau_{0}, \quad g=\tau_{1},\qquad
\tau_{n}=
\left|
\begin{array}{rrrr}
\varphi_{1}^{(n)}& \varphi_{1}^{(n+1)}& \cdots & \varphi_{1}^{(n+N-1)} \\
\varphi_{2}^{(n)}& \varphi_{2}^{(n+1)}& \cdots & \varphi_{2}^{(n+N-1)} \\
\vdots \quad& \vdots \quad & \ddots & \vdots \quad\quad\\
\varphi_{N}^{(n)}& \varphi_{N}^{(n+1)}& \cdots & \varphi_{N}^{(n+N-1)} \\
\end{array}
\right|,\label{A3}
\end{equation}
where 
\begin{eqnarray}
\varphi_{i}^{(n)}=p_{i}^{n}e^{p_{i}X+\frac{1}{p_{i}}T+\xi_{i,0}}+(-p_{i})^{n}e^{-p_{i}X-\frac{1}{p_{i}}T+\eta_{i,0}}.\label{A4}
\end{eqnarray}
The N-cuspon solution of the HS equation is given through the following dependent variable and hodograph transformations:
\begin{eqnarray}
u=(\log{f})_{TT},\quad x=X-(\log{f})_{T},\quad t=T,\quad \rho=\frac{1}{1-(\log f)_{XT}}.\label{A5}
\end{eqnarray}
Following the same procedure as in Theorem~\ref{thm:continuous-2hs}, the bilinear equations (\ref{A1}) lead to
\begin{eqnarray}
(\rho u_{X})_{T}-\frac{1}{2}(\rho u_{X})^{2}-4u=0.\qquad
\label{A55}
\end{eqnarray}
Applying the differential law (\ref{diflaw}), (\ref{A55}) is rewritten as 
\begin{eqnarray}
u_{tx}-uu_{xx}-\frac{1}{2}u_{x}^{2}-4u=0.
\end{eqnarray}
Differentiating this equation with respect to $x$, we obtain
\begin{eqnarray}
u_{txx}-4u_{x}-2u_{x}u_{xx}-uu_{xxx}=0,
\end{eqnarray}
which is exactly the HS equation (\ref{HS}) under the change $u \rightarrow -u$ and the choice $\kappa^{2}=2$.

The one- and two-cuspon solutions are obtained by taking the limit $c\rightarrow0$ in the 2-HS soliton formulas, with the constraint (\ref{2ps}) reducing to $q_{i}=-p_{i}.$
In what follows, we present examples of the one- and two-cuspon solutions. For $N=1$, the one-cuspon solution is
\begin{equation}
\begin{aligned}
u&=\frac{1}{p_{1}^{2}}\mathrm{sech}^{2}{\left(p_{1}X+\frac{1}{p_{1}}T+\frac{\xi_{1,0}-\eta_{1,0}}{2}\right)},\\
x&=X-\frac{1}{p_{1}}\tanh\left(p_{1}X+\frac{1}{p_{1}}T+\frac{\xi_{1,0}-\eta_{1,0}}{2}\right),
\qquad t=T.
\end{aligned}
\end{equation}

\begin{figure}[h]
\centering
         \includegraphics[keepaspectratio, scale=0.5]{
         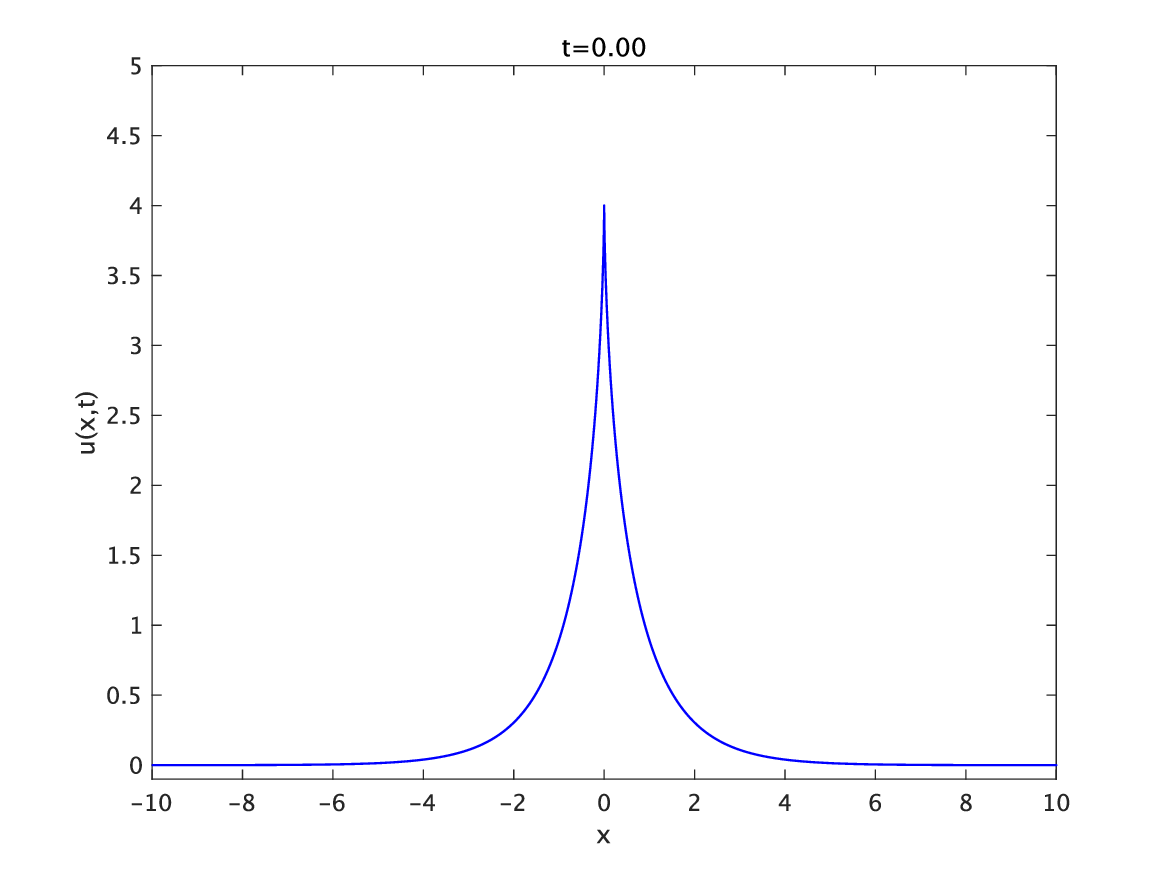}
         \caption{One-cuspon solution of the HS equation. 
         The parameters are $p_{1}=\frac{1}{2}, \xi_{1,0}=\eta_{1,0}=0$.} 
         \label{figA.1}

\end{figure}

\begin{figure}[h]
  \begin{tabular}{cc}
      \begin{minipage}[t]{0.45\hsize}
              \includegraphics[width=\linewidth]{
              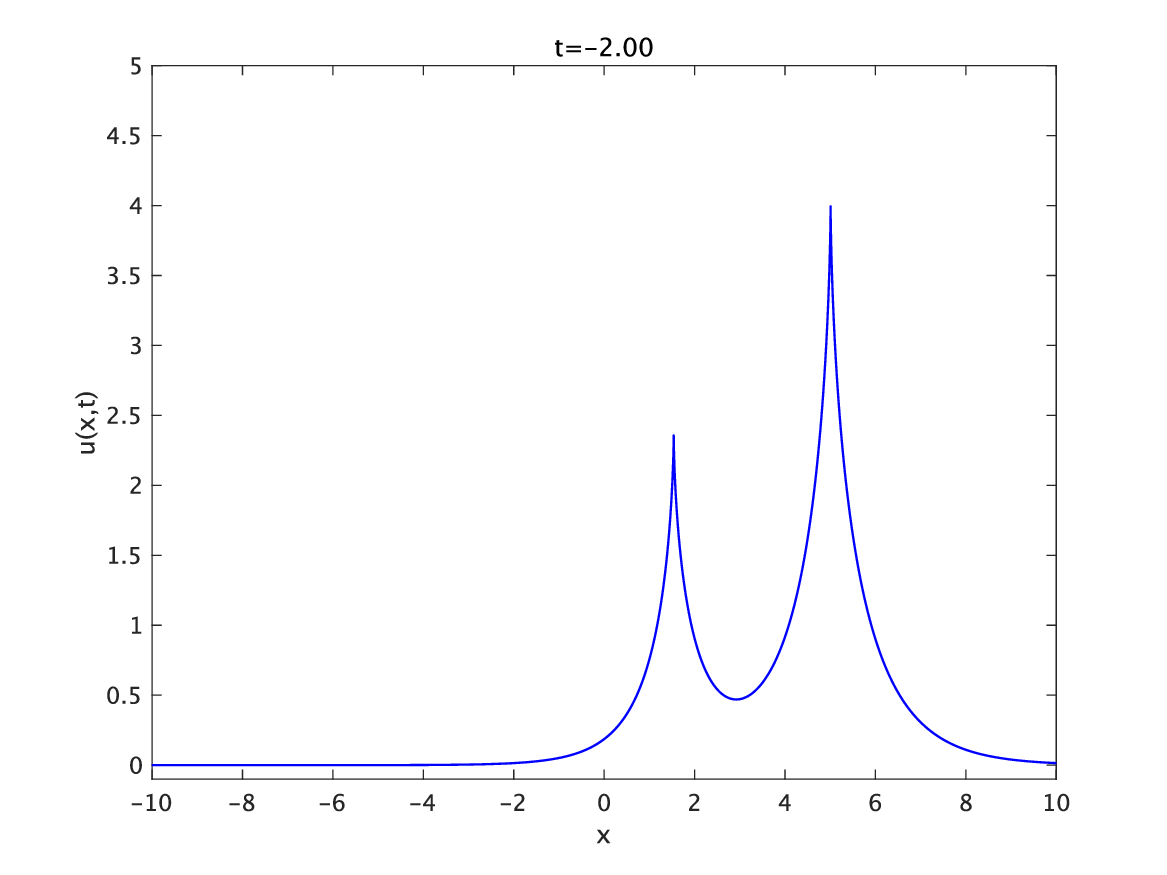}
        \captionsetup{labelformat=empty,labelsep=none}
      \end{minipage}
                \begin{minipage}[t]{0.45\hsize}
        \includegraphics[width=\linewidth]{
        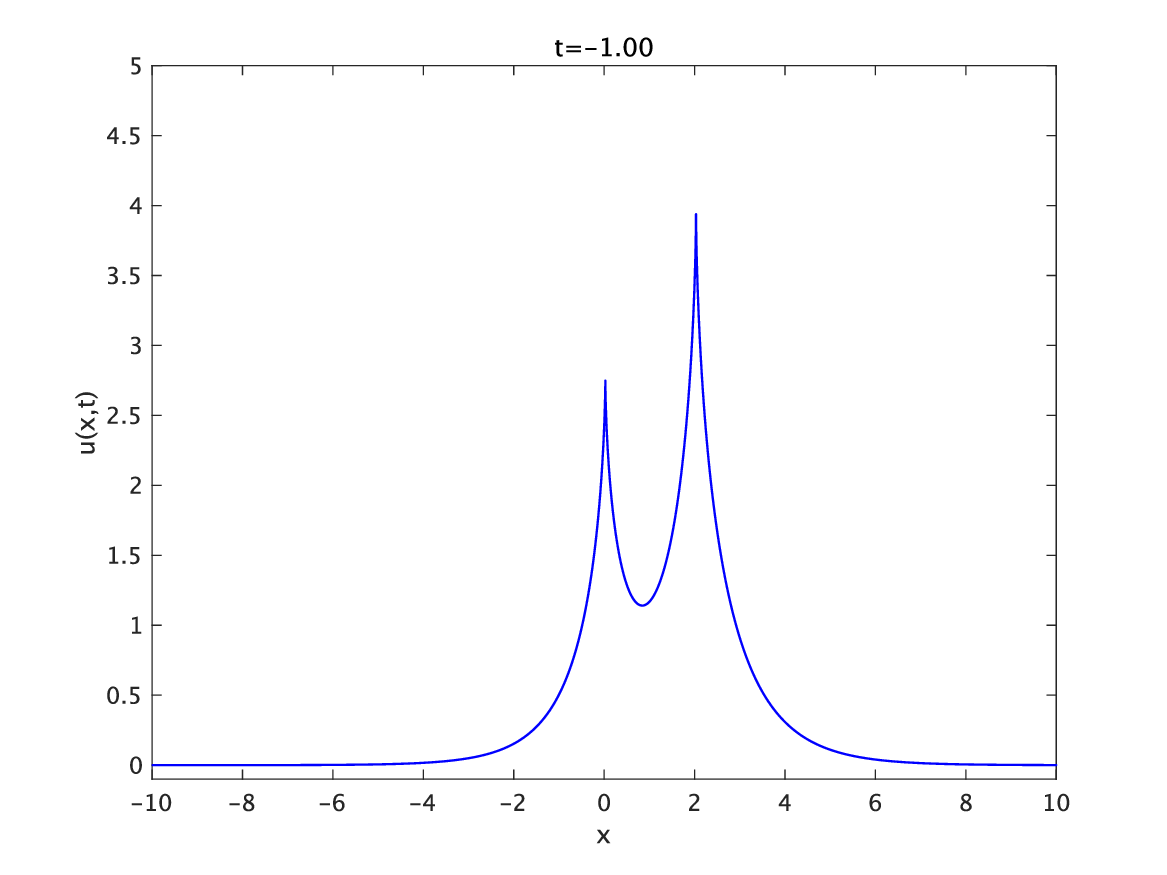}
        \captionsetup{labelformat=empty,labelsep=none}
      \end{minipage}
                     \end{tabular}
\\

 \begin{tabular}{cc}
      \begin{minipage}[t]{0.45\hsize}
        \includegraphics[width=\linewidth]{
        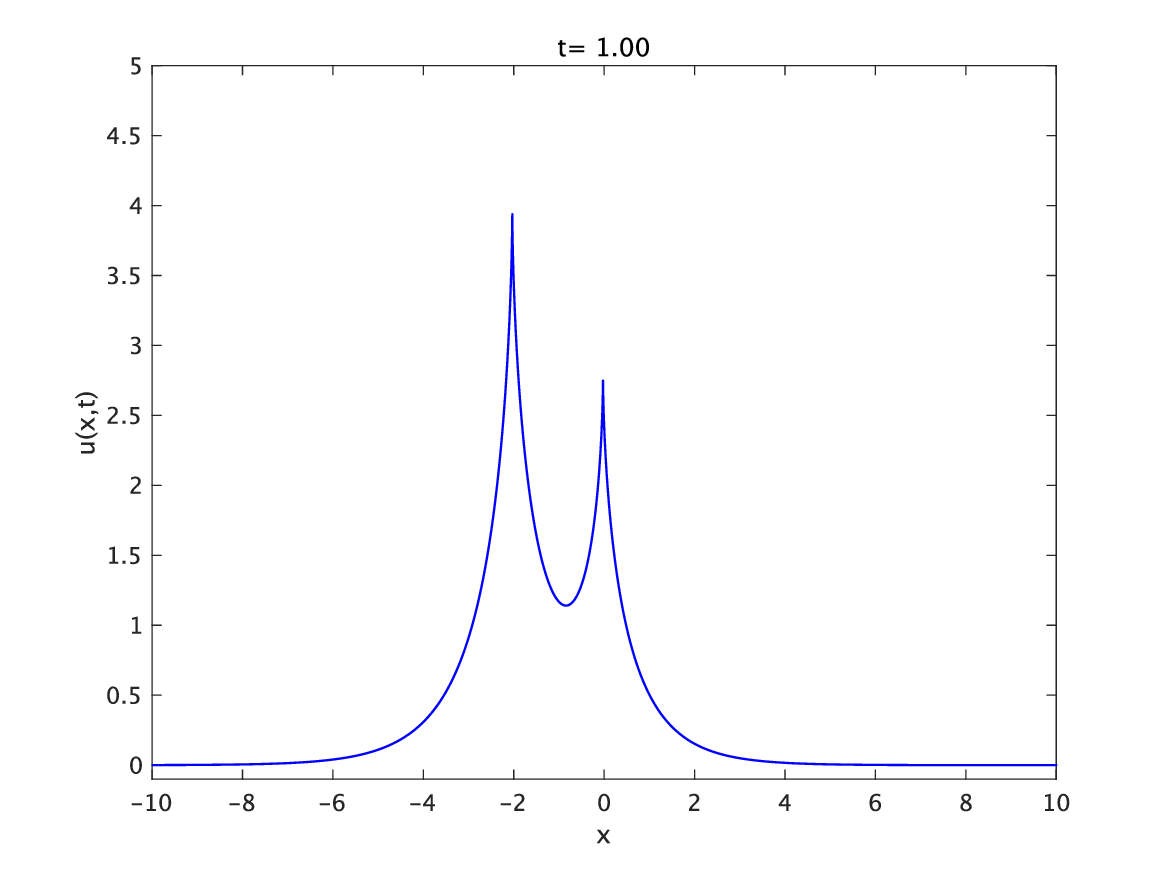}
        \captionsetup{labelformat=empty,labelsep=none}
      \end{minipage}

      \begin{minipage}[t]{0.45\hsize}
        \includegraphics[width=\linewidth]{
        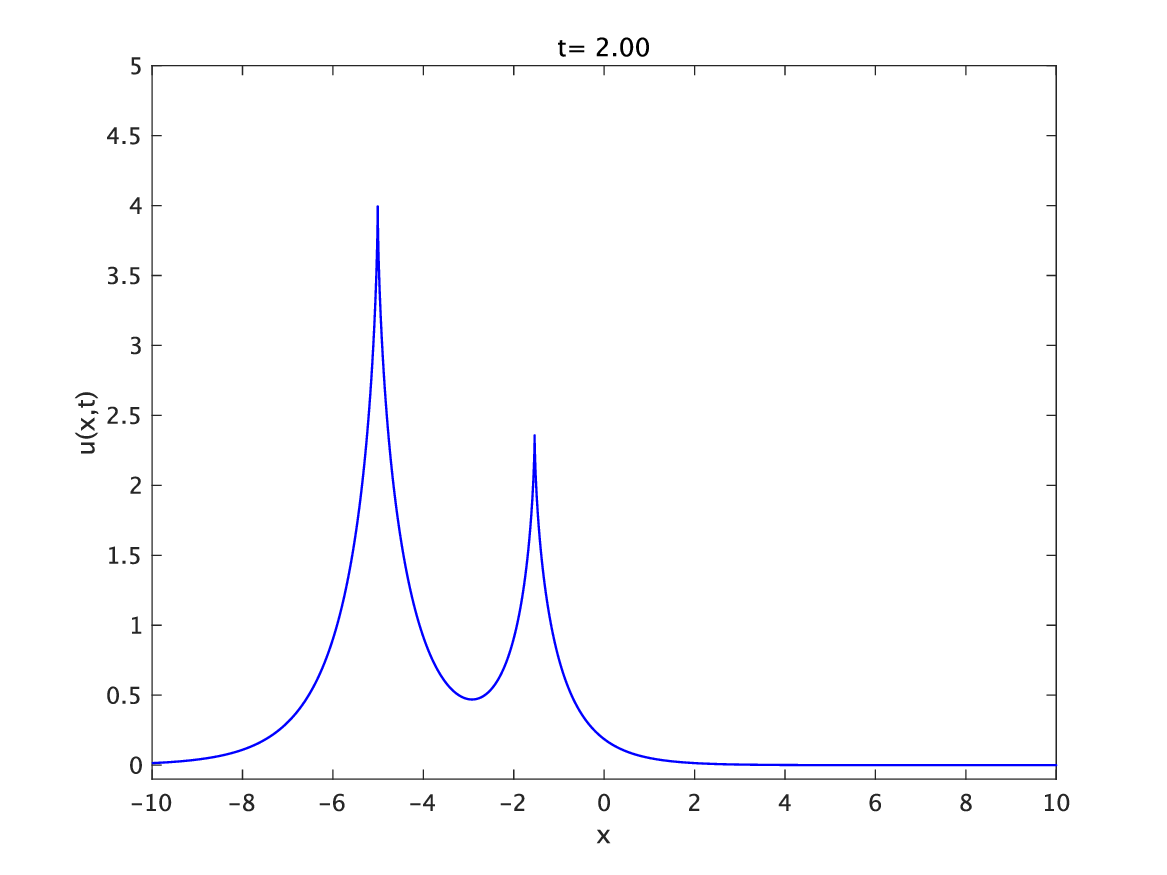}
        \captionsetup{labelformat=empty,labelsep=none}
      \end{minipage} 
         \end{tabular}
         \caption{Two-cuspon interaction of the HS equation. 
         The parameters are $p_{1}=\frac{1}{2}, p_{2}=\frac{2}{3}, \xi_{1,0}=\eta_{1,0}=\eta_{2,0}=0, 
         \xi_{2,0}=\pi\mathrm{i}$.}
         \label{figA.2}
         
  \end{figure}

Figure~\ref{figA.1} shows the one-cuspon solution. Its amplitude
is $1/p_{1}^{2}$, and its velocity in the $(X,T)$ plane is $-1/p_{1}^{2}$.
Figure~\ref{figA.2} shows the two-cuspon interaction.

The one- and two-cuspon solutions propagate to the left along the $x$-axis as time progresses.
Moreover, the N-cuspon solution (\ref{A3})--(\ref{A5}) is consistent with
the N-cuspon solution of the HS equation presented in the previous study \cite{maruno2}.

\section*{CRediT authorship contribution statement}
\textbf{Ayako Hori:} Conceptualization, Methodology, Formal analysis, Investigation, Validation, Visualization, Writing--original draft, Writing--review \& editing.
\textbf{Yuta Tanaka:} Conceptualization, Methodology, Formal analysis, Validation, Writing--review \& editing.
\textbf{Ken-ichi Maruno:} Conceptualization, Methodology, Formal analysis, Validation, Supervision, Writing--review \& editing.
\textbf{Yasuhiro Ohta:} Conceptualization, Methodology, Formal analysis, Validation, Writing--review \& editing.

\section*{Declaration of competing interest}
The authors declare that they have no known competing financial interests or personal relationships 
that could have appeared to influence the work reported in this paper.

\section*{Data availability}
No data were used for the research described in this article.

\section*{Declaration of generative AI and AI-assisted technologies in the manuscript preparation process}
During the preparation of this work, the authors used OpenAI's GPT-5.6 (via Codex) to assist with language editing, the identification of potentially
relevant prior literature and the checking of historical context, and the
development of algebraic and symbolic computation programs for REDUCE and
Mathematica used to verify the Lax pair calculations. The authors
independently consulted the original publications and verified the citations
and historical statements. After using this tool, the authors reviewed and
edited the content as needed and take full responsibility for the content of
the publication.

\section*{Acknowledgements}
This work was partially supported by JSPS KAKENHI Grant Numbers JP22K03441, JP23K22407, and JP26K06919, 
and by Waseda University Grants for Special Research Projects.

\end{document}